\newtheorem{remark}{Remark}
\newtheorem{theorem}{Theorem}
\newtheorem{lemma}{Lemma}
\newtheorem{corollary}{Corollary}
\newtheorem{proposition}{Proposition}
\def\ScaleIfNeeded{%
\ifdim\Gin@nat@width>\linewidth \linewidth \else \Gin@nat@width
\fi } \makeatother
\begin{document}

\title{Two-Timescale Design for STAR-RIS Aided NOMA Systems}

\author{
	Chenyu~Wu, Changsheng~You, \IEEEmembership{Member, IEEE}, Yuanwei~Liu, \IEEEmembership{Senior Member,~IEEE}, Shuo Shi, \IEEEmembership{Member,~IEEE}, and Marco Di Renzo, \IEEEmembership{Fellow,~IEEE} 	\vspace{-2em}
	
	%\IEEEauthorblockN{,~\IEEEmembership{Member,~IEEE,}}
	% ,~\IEEEmembership{Fellow,~IEEE,}
	%and \IEEEauthorblockN{,~\IEEEmembership{Senior Member,~IEEE,}     \IEEEmembership{Graduate Student Member, IEEE}

		\thanks{C. Wu, and S. Shi are with the School of Electronics and Information Engineering, Harbin Institute of Technology (HIT), Harbin 150001, China (e-mail: \{wuchenyu, guxuemai\}@hit.edu.cn).}
		\thanks{C. You is with the Department of Electronic and Electrical Engineering, Southern University of Science and Technology (SUSTech), Shenzhen 518055, China (email: youcs@sustech.edu.cn). %he was with the Department of Electrical and Computer Engineering, National University of Singapore, Singapore 117583.
		}
		\thanks{Y. Liu is with the School of Electronic Engineering and Computer Science, Queen Mary University of London, London E1 4NS, UK (email: yuanwei.liu@qmul.ac.uk).}
		\thanks{M. Di Renzo is with Universit\'{e} Paris-Saclay, CNRS, CentraleSup\'{e}lec, Laboratoire des Signaux et Syst\`{e}mes, 3 Rue Joliot-Curie, 91192 Gif-sur-Yvette, France. (email: marco.di-renzo@universite-paris-saclay.fr).}

		%\thanks{O. A. Dobre is with the Department of Electrical and Computer Engineering, Memorial University, St. John’s, NL A1C 5S7, Canada (e-mail: odobre@mun.ca).}
	}
\maketitle
\vspace{-1em}
\begin{abstract}
Simultaneously transmitting and reflecting reconfigurable intelligent surfaces (STAR-RISs) have emerged as a promising technology to reconfigure the radio propagation environment in the full space. Prior works on STAR-RISs have mostly considered the energy splitting operation protocol, which has high hardware complexity in practice. Moreover,  
the full and instantaneous channel state information (CSI) is always assumed available for designing the STAR-RIS passive beamforming, which, however, is practically difficult to obtain due to the large number of STAR-RIS elements.
%However, a major challenge for reaping the beamforming gain of STAR-RISs is to deal with the large overhead that is required to estimate the instantaneous channel state information (CSI). 
To address these issues, we study the \emph{mode switching} design in  STAR-RIS aided non-orthogonal multiple access (NOMA) communication systems. Moreover, two efficient \emph{two-timescale} (TTS) transmission protocols are proposed for different channel setups to maximize the respective average achievable sum-rate. Specifically, 1) for the case of line-of-sight (LoS) dominant channels, we propose the \emph{beamforming-then-estimate} (BTE) protocol, where the long-term STAR-RIS transmission and reflection coefficients are optimized based on the statistical CSI only, while the short-term power allocation at the base station (BS) is designed based on the estimated effective fading channels of all users; 2) for the rich scattering environment, we propose an alternative \emph{partition-then-estimate} (PTE) protocol, where the BS first determines the long-term STAR-RIS surface-partition strategy based on the path-loss information only, with each subsurface being assigned to one user; and then the BS estimates the instantaneous subsurface channels associated with the users and designs its power allocation and STAR-RIS phase-shifts accordingly. For the two proposed transmission protocols, we further propose efficient algorithms to solve the respective long-term and short-term optimization problems. Moreover, we show that both proposed transmission protocols substantially reduce the channel estimation overhead as compared to the existing schemes based on full instantaneous CSI. Last, simulation results validate the superiority of our proposed transmission protocols as compared to various benchmarks.  It is shown that the BTE protocol outperforms the PTE protocol when the number of STAR-RIS elements is large and/or the LoS channel components are dominant, and vice versa.
\end{abstract}
\vspace{-1em}
\begin{IEEEkeywords}
\vspace{-1em}	
Reconfigurable intelligent surface (RIS), simultaneous transmission and reflection (STAR), non-orthogonal multiple access (NOMA), statistical CSI, two-timescale optimization.
\end{IEEEkeywords}

\section{Introduction}
Reconfigurable intelligent surface (RIS)\cite{ris_survey2,survey_ris}, also known as intelligent reflecting
surface (IRS)\cite{zr_survey}, has been envisioned as a promising technology to enhance the spectral efficiency of wireless communication systems. Specifically, RIS is a planar array comprising a massive number of tunable and low-cost elements. By smartly controlling the amplitude/phase-shift of individual elements, RIS can program the wireless propagation environment for achieving various functions, such as enhancing the desirable signal strength\cite{irs_joint} or suppressing the interference\cite{mimoris}. Nevertheless, the  conventional reflecting-type RISs  can only achieve half-space reflection as specified by their pointing directions, thereby limiting the network coverage. 
%As such, unnecessary topological constraints have to be satisfied for their practical deployment.
To tackle this issue, a new type of RIS, called \emph{simultaneous transmitting and reflecting reconfigurable intelligent surfaces} (STAR-RISs)\cite{STAR}, or intelligent omni-surface (IOS)\cite{ios}, has been recently proposed. In contrast to conventional RISs\cite{ris_survey2,survey_ris,zr_survey,irs_joint,mimoris}, STAR-RISs enable simultaneously reflecting and/or transmitting the
signals incident on both sides of the surface, thus achieving $\text{360}^\circ$ wireless coverage. Particularly, %a STAR-RIS is composed of elements that support both electric polarization currents and magnetization currents\cite{SRAR}. With this unique hardware structure, 
the amplitude and phase responses of the transmitted and reflected signals can be jointly and appropriately optimized, which provides an extra degree of freedom for enhancing the communication performance.

%Furthermore, for operating the STAR-RIS, three practical protocols have been proposed, namely the energy splitting (ES), mode switching and time switching (TS)\cite{STAR,star1}.

On the other hand, non-orthogonal multiple access (NOMA) is a promising technology to satisfy the stringent requirements of next-generation wireless networks such as extremely high spectral efficiency and massive connectivity\cite{proceeding}. Different from the conventional orthogonal multiple-access (OMA) that serves users over orthogonal time/frequency resource blocks (RBs), NOMA allows multiple users to share the same RB in the code or power domain. Specifically, for power-domain NOMA\cite{power_noma2}, the users’ signals are superposed at the transmitter and recovered at the receiver by using the successive interference cancellation (SIC) technique. Existing works have revealed that NOMA is able to achieve a much higher rate than OMA when the users' channels are more distinctive\cite{ding_noma_tvt}. This thus motivates the study of STAR-RIS aided NOMA systems, as STAR-RISs can flexibly manipulate the wireless channels to achieve enhanced NOMA gain.
%Driven by the appealing features of STAR-RIS and NOMA, in this paper, we aim to exploit the benefits of their incorporation in wireless communication systems. 

%Hence, significant performance gain can be witnessed by employing NOMA compared with .
\vspace{-1em}
\subsection{Related Works}

\emph{1) RIS-aided Wireless Systems:} In the existing literature, RIS has been extensively investigated in various setups, such as orthogonal frequency division multiplexing (OFDM)\cite{ofdm}, millimeter wave (mmWave) communications\cite{irs_mmwave}, unmanned aerial vehicle (UAV) communications \cite{huameng_uav}, NOMA transmissions\cite{risnoma4,risnoma7,risnoma8}, and so on. These works have mostly assumed the availability of instantaneous channel state information (CSI) for reaping the RIS passive beamforming gain. However, obtaining the full and accurate CSI is practically difficult due to the large number of reflecting elements and their passive nature. As a result, various approaches have been proposed to efficiently estimate the RIS-reflected channels. For example, under the single-user setup, the 
authors of \cite{dft} and \cite{discrete_jsac} designed efficient RIS reflection training patterns to estimate the cascaded channels under the ideal case with continuous phase-shifts and the practical case with discrete phase-shifts, respectively. Particularly, it was shown that the length of the required training sequence for channel estimation scales with the number of RIS elements\cite{dft}. For the multiuser scenario, various channel properties such as the common base station (BS) to RIS channel \cite{ce_ofdma}, channel spatial correlation \cite{ll}, and channel sparsity\cite{part2} were leveraged for reducing the channel estimation overhead. A fast multi-beam training
method for RIS-assisted mmWave system was proposed in \cite{beamtraining}. Moreover, since the statistical CSI (S-CSI) varies slowly and can hence be more easily obtained, efficient two-timescale (TTS) RIS beamforming designs have been proposed that exploit the S-CSI for designing RIS reflections and the instantaneous CSI (I-CSI) for designing the BS transmit beamforming\cite{scsi1,scsi2,downlink_scsi,kangda2,marco3}. Despite the prominent passive beamforming gain, the performance of wireless systems aided by an RIS are fundamentally constrained by its serving region. This thus motivates the design of a new STAR-RIS that is capable of serving the users on both of its sides. 

\emph{2) STAR-RIS-assisted Wireless Systems:} Recently, growing research attention has been devoted to investigating the benefits of deploying STAR-RISs in wireless communication systems. Specifically, the authors of \cite{SRAR} presented an hardware model for STAR-RIS and its channel models in the near-field and far-field regions. In \cite{ios2}, a circuit-based reflection-refraction model for the IOS was proposed and verified by experimental results.
%The diversity order gain provided by STAR-RIS was also verified by comparing it with reflecting-type RISs. 
Besides, in \cite{STAR,star1}, three STAR-RIS operation protocols have been proposed, namely \emph{energy splitting (ES), mode switching (MS)}, and \emph{time switching (TS)}. Based on these protocols, the joint active and passive beamforming design was studied to minimize the total power consumption at the access point\cite{star1}. For multiuser radio access, the integration of STAR-RISs and NOMA has drawn great attention. For example, the authors of \cite{star2} revealed that the coverage range can be extended when employing STAR-RISs in NOMA networks. Besides, a STAR-RIS-assisted multi-cluster NOMA system was studied in \cite{star4}, where the decoding order, power allocation, active beamforming, and STAR-RIS coefficients were jointly optimized for maximizing the achievable sum-rate. %In \cite{coupled_phase}, a practical coupled phase-shift model was proposed, based on which a power consumption minimization problem was formulated and solved by an element-wise alternating optimization algorithm. 
Moreover, the authors of \cite{analysis_ios} proposed three channel models for a STAR-RIS aided NOMA network, based on which the diversity gain was analyzed.

\vspace{-1em}
\subsection{Motivation and Contributions}

In view of the existing works on STAR-RIS aided systems, there are two  limitations that have not been well addressed. First, most works have considered the ES operation protocol for STAR-RIS. However, this protocol generally incurs a relatively high hardware complexity, since it necessitates the simultaneous manipulation of both the transmitted and reflected signals for each STAR-RIS element. Moreover, the ES protocol requires the joint optimization of the energy splitting ratio and transmission/reflection phase-shifts, which is generally difficult to solve. Second, existing works on STAR-RIS % on RIS-NOMA \cite{risnoma3,risnoma4,risnoma1,risnoma7,risnoma8,risnoma2,risnoma5,risnoma6} and STAR-RISs\cite{star1,star_secrecy,star2,star3,coupled_phase,analysis_ios} 
mainly focused on its passive beamforming design for data transmission, under the assumption of full I-CSI available at the transmitter. Nevertheless, as recently reported in \cite{ce_survey,star_ce}, acquiring the I-CSI associated with all STAR-RIS elements and all users usually incurs a prohibitively high channel estimation overhead, which, in general, is proportional to the number of users and STAR-RIS elements. In practical systems, a large channel training overhead may result in less time remaining for data transmission and may hence degrade the communication performance. This issue becomes more severe when the wireless channels are highly dynamic due to the short channel coherence time.

To overcome these issues, we study the MS design for a STAR-RIS aided NOMA system. Compared with ES, MS is much easier to implement, since each STAR-RIS element can be flexibly tuned to operate in either the transmission or reflection mode, thus avoiding the intrinsic coupling between them. Moreover, to reduce the channel estimation overhead for acquiring instantaneous and full CSI, we propose two efficient TTS transmission protocols for two channel setups. The main contributions of this paper are summarized as follows.

\begin{itemize}
	%The formulated problem is transformed according to the expectation of the known channel knowledge.
	\item  First, we consider the channel setup where the STAR-RIS associated channels are line-of-sight (LoS) dominant. For this case, we propose an efficient TTS transmission protocol called \emph{beamforming-then-estimate} (BTE). Specifically, the BS exploits the S-CSI for designing the STAR-RIS beamforming coefficients (including the mode selection and phase-shift of each element) in the long term and keeps them fixed as long as the S-CSI remains unchanged. Meanwhile, in each time slot, the BS estimates the instantaneous effective CSI for each user and designs its short-term power allocation accordingly. To maximize the average sum-rate, we formulate an optimization problem 
	 that jointly optimizes the long-term STAR-RIS beamforming coefficients and the short-term BS power allocation. To tackle this challenging problem, we first solve the short-term optimization problem to obtain the BS power allocation based on the effective CSI. Then,  an efficient penalty-based method is developed to suboptimally solve the long-term optimization problem for obtaining the STAR-RIS transmission- and reflection-coefficient vectors based on the S-CSI.  Particularly, for the BTE protocol, the channel estimation overhead  is equal to the number of users $K$.
	\item  Next, as the BTE protocol is ineffective in rich scattering environments, we propose another dedicated TTS transmission protocol for this channel setup, which is called \emph{partition-then-estimate} (PTE). Specifically, based on the information of large-scale path-loss, the STAR-RIS is partitioned into several transmission/reflection subsurfaces, each associated with one specific user. 
	Then, in each time slot, the BS estimates the instantaneous subsurface channels associated with each user (referred to as the partial I-CSI), based on which it jointly designs the short-term BS power allocation and STAR-RIS phase-shifts.
	Similarly, the short-term optimization problem is optimally solved based on the partial I-CSI. For the long-term optimization problem, on the other hand, we propose a low-complexity yet efficient algorithm to determine the proper number of elements assigned to each user. The channel estimation overhead in each time slot for this protocol is $K+M$, where $M$ denotes the number of STAR-RIS elements.
	%based on the information of large-scale path-loss. 
	\item Last, simulation results are presented under various system setups to demonstrate the effectiveness of our proposed TTS transmission protocols as compared to several benchmarks. %It is shown that the channel estimation overheads for the two proposed protocols are much smaller than that of the full I-CSI based design, i.e. $KM+K$. 
	It is shown that the BTE protocol is more effective when $M$ is large and the LoS channel components are dominant, whereas the PTE protocol is preferred when $M$ is small and/or the channel coherence time is sufficiently long.
\end{itemize}

\vspace{-1em}
\subsection{Organization and Notations}

The rest of this paper is organized as follows. In Section II, we present the system model for the considered STAR-RIS aided NOMA communication system. In Sections III and IV, we propose two practical TTS transmission protocols for LoS-dominant channels and rich scattering channels, respectively. Efficient algorithms for solving the corresponding average sum-rate maximization problem are also provided. Section V presents numerical
results to validate the effectiveness of the proposed designs. Finally, Section VI concludes this paper.

\emph{Notations:}
 $\mathbb{C}^{M\times1}$ denotes the space of $M\times1$ complex-valued
vectors. $\mathbf{a}^H$ denotes the conjugate transpose of vector $\mathbf{a}$. $\text{diag}(\mathbf{a})$
denotes a diagonal matrix with the diagonal elements given by $\mathbf{a}$. $\text{diag}(\mathbf{A})$
denotes a vector extracted from the diagonal elements of matrix $\mathbf{A}$. The distribution
of a circularly symmetric complex Gaussian variable (CSCG) with mean $\mu$ and variance $\sigma^2$ is denoted by $\mathcal{C}\mathcal{N}\sim{(\mu,\sigma^2)}$.
The $m$-th element of a vector $\mathbf{a}$ is denoted by $[\mathbf{a}]_m$.  For a complex scalar $x$, $|x|$, $\angle{x}$ and $\text{Re}(x)$ denote its modulus, angle and real part, respectively. $\triangleq$ stands for definition.

\vspace{-1em}
\section{System Model}

\begin{figure}[t!]
	\centering
	\includegraphics[width=0.6\textwidth]{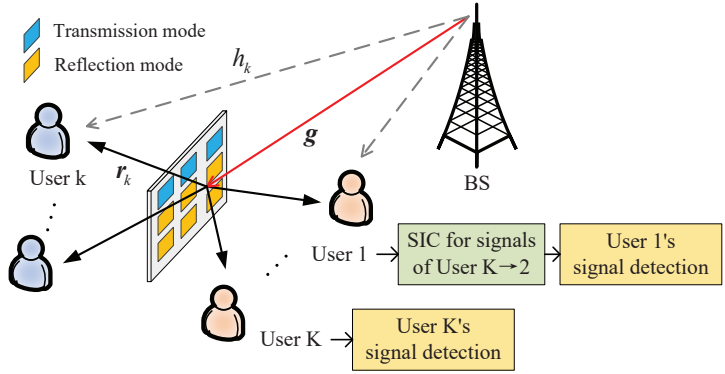}\\
	\caption{A STAR-RIS aided multiuser NOMA communication system, where the STAR-RIS operates under the mode switching protocol.}\vspace{1em}
\end{figure}

As depicted in Fig. 1, we consider a narrow-band wireless communication system, where a STAR-RIS comprising $M$ elements is deployed to assist the downlink communication from a single-antenna BS to $K$ single-antenna users, denoted by the set $\mathcal{K}=\{1,...,K\}$. The STAR-RIS divides the full space of signal propagation into two regions, namely, the transmission region and the reflection region. Accordingly, the users are divided into two groups, namely T users and R users, which are denoted by $\mathcal{K}_t=\{1,...,K_t\}$ and $\mathcal{K}_r=\{1,...,K_r\}$, respectively.

Let $h_{k}\in\mathbb{C}$ denote the baseband equivalent channel from the BS to user $k\in\mathcal{K}_s$, with $s\in\{t,r\}$ specifying the transmission/reflection region where the user is located. Further, let $\mathbf{r}_{k}^H\in\mathbb{C}^{1\times{M}}$ and $\mathbf{g}\in\mathbb{C}^{M\times{1}}$ denote the channel from the STAR-RIS to user $k$ and from the BS to the STAR-RIS, respectively. We assume the Rician fading channel model for the links associated with the STAR-RIS, i.e., \{$\mathbf{r}_{k}^H$\} and $\mathbf{g}$;  while the BS-user links \{$h_{k}$\} are modeled by Rayleigh fading. As such, the channel coefficients are given by
\begin{subequations}
	\begin{align}
		&{\mathbf{g}} = \sqrt {\delta_{BS}} \left( {\sqrt {\frac{{{\kappa_1}}}{{{\kappa_1} + 1}}} {\bar{\mathbf{g}}} + \sqrt {\frac{1}{{{\kappa_1} + 1}}} {\hat{\mathbf{g}}}} \right),\\
		&{{\mathbf{r}}_{k}} = \sqrt {\delta_{Sk}} \left( {\sqrt {\frac{{{\kappa_2}}}{{{\kappa_2} + 1}}} {\bar{\mathbf{r}}_{k}} + \sqrt {\frac{1}{{{\kappa_2} + 1}}} {\hat{\mathbf{r}}_{k}}} \right),\forall{k}\in\mathcal{K},\\
		&h_{k}=\sqrt {\delta_{k}}\hat{h}_k,\forall{k}\in\mathcal{K},
	\end{align}
\end{subequations}
where $\delta_{BS}$, $\delta_{Sk}$, $\delta_{k}$ are the distance-dependent path-loss of the BS to STAR-RIS, STAR-RIS to user $k$, BS to user $k$ links, respectively, $\bar{\mathbf{g}}$ and $\bar{\mathbf{r}}_k$ denote the normalized LoS components,  $\hat{\mathbf{g}}$, $\hat{\mathbf{r}}_k$ and $\hat{h}_k$ denote the normalized non-LoS (NLoS) components, which follow the CSCG distribution with zero mean and unit variance, and $\kappa_1,\kappa_2$ are the Rician factors. 

%We assume a quasi-static block-fading model and focus on one particular fading block where all channels involved remain approximately constant. 

\vspace{-1em}
\subsection{STAR-RIS Operation Protocol: Mode Switching}
We consider the mode switching operation protocol for the studied STAR-RIS aided communication system. According to \cite{star1}, the STAR-RIS coefficients can be characterized by the transmission and reflection matrices $\boldsymbol{\Theta}_s=\text{diag}(\boldsymbol{\theta}_s)\text{diag}(\boldsymbol{\beta}_s)$, where $\boldsymbol{\theta}_{s}= [{e^{j{\theta_1^s}}},{{e^{j{\theta_2^s}}}}, \ldots ,{e^{j{\theta _M^s}}}]^T$, $\boldsymbol{\beta}_s=[{\sqrt{\beta_1^s}},...,{\sqrt{\beta_M^s}}]^T,s\in\{t,r\}$, with $\beta_m^s\in\{0,1\}$ and $\theta_m^s\in[0,2\pi),m\in\mathcal{M}=\{1,...,M\}$ denoting the operation mode and phase-shift of the $m$-th element, respectively. Further, we define $\mathbf{v}_s\triangleq{\text{diag}}(\boldsymbol{\Theta}_s), s\in\{t,r\}$ as the transmission- and reflection-coefficient vectors. 
According to the law of energy conservation, we have $\beta_m^t+\beta_m^r\leq1$ or equivalently $|[\mathbf{v}_t]_m|^2+|[\mathbf{v}_r]_m|^2\leq1$ needs to be satisfied. 
Therefore, each element of STAR-RIS operates in either the full transmission or full reflection mode, depending on the value of $\beta_m^s$. As such, the STAR-RIS-assisted effective channel  between 
the BS and user $k$ can be expressed as
\begin{equation}\label{effective_channel}
	c_k(\mathbf{v}_s)={h}_{k}+\mathbf{r}_{k}^H\boldsymbol{\Theta}_s\mathbf{g}=h_k+\mathbf{w}_k^H\mathbf{v}_s,
\end{equation}
where ${\mathbf{w}}_k^H={{\mathbf{r}}_{k}^H{\text{diag}}({\mathbf{g}})}$.

It is worth noting that the MS protocol can be regarded as a special case of the ES protocol where the amplitude coefficients for transmission and reflection are set as continuous variables, i.e., $\beta_m^s\in[0,1]$. Compared with the ES protocol that provides high STAR-RIS reflection/transmission design flexibility, the MS protocol is particularly attractive for practical use due to the following reasons. Firstly, the transmission/reflection mode switch of each element is easy to realize from the hardware implementation perspective \cite{star1}. Secondly, under the ES protocol, more STAR-RIS coefficients need to be optimized, which requires higher computational complexity than the MS protocol. Thirdly, as recently reported in\cite{couple2}, the MS protocol is not restricted by the transmission-reflection phase correlation constraint which may cause performance degradation to the ES protocol. Finally, once the transmission/reflection modes of the STAR-RIS elements are fixed, the overall channel estimation overhead for STAR-RIS MS can be reduced by half, as compared with STAR-RIS under ES.

\vspace{-1em}
\subsection{NOMA Transmission}
We consider NOMA transmission for all users. By adopting superposition coding, the received signal at user $k$ is given by
\begin{equation}
	y_{k}=c_k(\mathbf{v}_s)\sum_{k=1}^{K}\sqrt{p_k}s_k+n_{k},
\end{equation}
where $s_k\sim {\mathcal{C}\mathcal{N}}\left(0,1\right)$ denotes the transmitted data symbol for user $k$,  $n_{k}\sim {\mathcal{C}\mathcal{N}}\left(0,\sigma^2\right)$ is the additive white Gaussian noise, $p_k$ is the transmit power allocated to user $k$. For NOMA transmission, by invoking the SIC technique, the user with the stronger channel power gain is able to first decode the signals of the other users, and to then cancel their contributions before decoding its own signal. However, since the effective channel power gain of each user is determined by the STAR-RIS coefficients according to (\ref{effective_channel}), the optimal decoding order has $K!$ combinations. To reduce the computational complexity for the exhaustive search, we propose the following simple scheme for the decoding order. Without loss of generality, we assume that the users are arranged in  descending order according to their distances to the BS, i.e., $l_1>l_2>...>l_K$. Then, we assume that the NOMA decoding order also follows this permutation, i.e., user 1 is always the strongest user whose signals are last decoded. %This assumption is reasonable since a shorter BS-user distance leads to a more favorable direct link, which usually dominates the effective channel power gain compared with the STAR-RIS-enabled links. 
To guarantee the success of SIC, the effective channel power gain should satisfy $|c_k(\mathbf{v}_s)|^2>|c_j(\mathbf{v}_s)|^2, \forall{k<j}$. As such, the achievable rate of user $k$ in bits/second/Hz (bps/Hz) is given by
\begin{equation}\label{noma_expect_rate}
R_k= \log_2\left( 1+\frac{p_k{|c_k(\mathbf{v}_s)|^2}}{\sum_{j>k}p_{j}|c_j(\mathbf{v}_s)|^2+\sigma^2}\right).
\end{equation}
%where $T_e$ and $T_c$ denote the length of training pilots for channel estimation (as specified by the adopted transmission scheme and the corresponding channel estimation method) and channel coherence time (normalized by the symbol duration), respectively.

\subsection{Conventional Transmission Protocol: Full I-CSI based Design}

In the existing literature, the conventional transmission protocol for STAR-RIS is as follows. 
In each channel coherence block, the BS first estimates the full I-CSI of all element-wise channels (i.e., the cascaded channels $\mathbf{w}_k$) as well as the direct channels $h_k$. Then, the BS determines the STAR-RIS transmission/reflection coefficients based on the estimated full I-CSI. However, this approach entails at least $KM+K$ training symbols by using existing channel estimation methods\cite{star_ce}, which incurs prohibitively high channel estimation overhead, and hence reduces the average achievable rate due to a short time for data transmission.
%Denote $T_e$ and $T_c$ as the training time for channel estimation and channel coherence time (normalized by the symbol duration), respectively.

To address this issue, in the next sections, we propose two practical transmission protocols for balancing the tradeoff between system performance and channel estimation overhead.

\section{TTS Transmission Protocol for LoS-Dominant Channels}

In this section, we consider the general scenario where the STAR-RIS associated channels are LoS-dominant. For this scenario, we propose an efficient TTS transmission protocol, called BET protocol; according to which we optimize the long-term STAR-RIS transmission/reflection coefficient vectors based on the S-CSI, and the short-term BS power allocation based on the  effective I-CSI. 
\vspace{-1em}
\subsection{Proposed BTE Transmission Protocol}
As shown in Fig. \ref{BTE}, the proposed BTE protocol consists of the following three main phases. In the first phase, the long-term S-CSI (including the information of the path-loss, Rician factors and the LoS components of all channels) is first estimated at the STAR-RIS and BS. Then, based on the S-CSI, the operation modes and phase-shifts of the STAR-RIS elements are designed and kept fixed for the subsequent coherence blocks for which the S-CSI remains unchanged. In the second phase, the BS estimates the instantaneous effective channels of all users in each time block, based on which the transmit power allocation is optimized for maximizing the achievable sum-rate. In the third phase, after effective CSI estimation, the BS transmits data to the users.
 \begin{figure}[t!]
 	\centering
 	\includegraphics[width=0.9\textwidth]{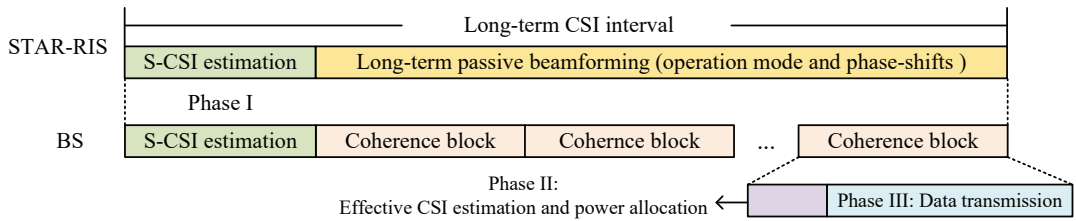}\\
 	\caption{Illustration of the proposed BTE protocol.}\vspace{1em}\label{BTE}
 \end{figure}

The benefits of this protocol are as follows. First, the S-CSI mainly depends on the locations and channel statistics which vary slowly and remain largely invariant for a number of coherence blocks. As long as the S-CSI remains unchanged, the STAR-RIS coefficients are kept fixed regardless of the instantaneous channel variations. Therefore, compared to the full I-CSI based design which requires the instantaneous configuration of the STAR-RIS, the proposed BTE protocol greatly reduces the computational complexity and feedback overhead. Second, in each coherence block, the BS only needs to estimate the effective CSI (i.e., $|c_k(\mathbf{v}_s)|^2,\forall k$), which only requires one pilot symbol for each user. Hence, the overall channel training overhead for the BTE protocol is equal to $K$, which is much smaller than that of the conventional full I-CSI based transmission protocol (i.e., $KM+K$).\footnote{Since 
the S-CSI is only estimated in a long-term interval, we neglect the associated estimation overhead.}

For simplicity, we assume that the S-CSI and the effective I-CSI are perfectly acquired at the BS/STAR-RIS via existing channel estimation methods (see e.g., \cite{mingmin,aoa}); while the extension to a robust design under imperfect CSI is left for future work.

\vspace{-1em}
\subsection{Problem Formulation}
By considering the channel estimation overhead, the achievable rate of user $k$ is given by $R_k^a=(1-\frac{K}{T_c})R_k$, where $T_c$ denotes the channel coherence time (normalized by the symbol duration). For the proposed BTE protocol, we aim to maximize the average sum-rate by optimizing the long-term STAR-RIS transmission- and reflection-coefficient vectors $\{\mathbf{v}_s\}$ and the short-term BS power allocation $\{p_k\}$, subject to constraints on the individual user’s rate given by $\gamma_k$.  Mathematically, this optimization problem can be formulated as
\begin{subequations}
	\begin{align}
		(\text{P1}):\; { \mathop{\max}\limits_{\{\mathbf{v}_s\}}}\;&\mathbb{E}\left\lbrace  {\mathop{\max}\limits_{\{p_k\}}}\; \sum_{k=1}^K{R_k^a} \right\rbrace  \\
		{\rm{s.t.}}\;\;	
		\label{P_rate}&{R}_k^a \geq \gamma_k,\forall{k\in\mathcal{K}},\\
		&|c_k(\mathbf{v}_s)|^2>|c_j(\mathbf{v}_s)|^2, \forall{k<j}, \\
		\label{P_energy}&|[\mathbf{v}_t]_m|^2+|[\mathbf{v}_r]_m|^2\leq1,\forall{m\in\mathcal{M}},\\
		\label{P_mode}&|[\mathbf{v}_s]_m|\in\{0,1\},\forall{m}\in\mathcal{M},\forall{s}\in\{t,r\},\\
		\label{P_power}&\sum_{k=1}^K{p_k}\leq{P_{\text{max}}},
	\end{align}
\end{subequations}
where the expectation in (5a) is taken over the random channel realizations.
Note that the inner rate maximization problem is to optimize the short-term BS power allocation in each time slot given the STAR-RIS coefficient vectors $\{\mathbf{v}_s\}$, while the outer rate maximization problem is to optimize $\{\mathbf{v}_s\}$. Problem (P1) is challenging to solve since the design variables are highly coupled in the objective function and constraints. Moreover, it is difficult to obtain a closed-form expression for $\mathbb{E}\left\lbrace \sum_{k=1}^K{R_k^a} \right\rbrace$. To tackle these challenges, we propose an efficient algorithm to obtain a high-quality suboptimal solution in the next subsection.

\vspace{-1em}
\subsection{Proposed Solution to Problem (P1)}

\emph{i) Short-term BS power allocation optimization:} 
First, from problem (P1), the short-term optimization problem reduces to 
\begin{subequations}
	\begin{align}
		(\text{P1.1}):\;  & {\mathop{\max}\limits_{\{p_k\}}}\; \sum_{k=1}^K{R_k^a} \\
		{\rm{s.t.}}\;\;	
		&\rm \eqref{P_rate},\eqref{P_power}.\nonumber
	\end{align}
\end{subequations}

The optimal solution to (P1.1) is given as follows.
\begin{proposition}
	\rm Given the effective channel power gains $\{|c_k(\mathbf{v}_s)|^2\}$, the optimal BS power allocation is given by
	\begin{equation}\label{optimal_power}
		\left\lbrace 
		\begin{aligned}
			&p_K^*=\frac{2^{\gamma_K}-1}{2^{\gamma_K}}\left(P_{\text{max}}+\frac{\sigma^2}{{|c_K(\mathbf{v}_s)|^2}}\right),\\
			&\vdots\\
			&p_2^*=\frac{2^{\gamma_2}-1}{2^{\gamma_2}}\left(P_{\text{max}}-\sum_{k=3}^K{p_k^*}+\frac{\sigma^2}{{|c_2(\mathbf{v}_s)|^2}}\right),\\
			&p_1^*={P_{\text{max}}}-\sum_{k=2}^K{p_k^*},
		\end{aligned}
		\right.
	\end{equation}
	and problem P(1.1) is feasible if the solution satisfies $p_1^*\geq\frac{(2^{\gamma_1}-1)\sigma^2}{|c_1(\mathbf{v}_s)|^2}$ and $p_k^*>0,\forall{k>1}$.
\end{proposition}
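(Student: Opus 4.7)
The plan is to characterize the optimal power allocation via a monotonicity-based sensitivity argument that follows the SIC decoding order, and then to extract the closed form by a backward recursion. The ordering $|c_1(\mathbf{v}_s)|^2>|c_2(\mathbf{v}_s)|^2>\cdots>|c_K(\mathbf{v}_s)|^2$ imposed by (5c) is the key ingredient.

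A convenient first step is to reparameterize by the cumulative allocations $q_k \triangleq \sum_{j\le k} p_j$, with $q_0=0$ and $q_K\le P_{\text{max}}$. Writing each $R_k$ as a ratio of the SINR numerator-plus-one over the denominator-plus-one under the SIC ordering, and regrouping the $q_j$'s across $\sum_{k=1}^K R_k$, yields an expression of the form
\[
\sum_{k=1}^K R_k \;=\; \log_2\!\left(\frac{\sigma^2+q_K|c_K(\mathbf{v}_s)|^2}{\sigma^2}\right)+\sum_{j=1}^{K-1}\log_2\!\left(\frac{\sigma^2+q_j|c_j(\mathbf{v}_s)|^2}{\sigma^2+q_j|c_{j+1}(\mathbf{v}_s)|^2}\right),
\]
which cleanly exhibits how each $q_j$ enters the objective. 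A one-line derivative calculation on the $j$-th summand collapses its sign to that of $(|c_j(\mathbf{v}_s)|^2-|c_{j+1}(\mathbf{v}_s)|^2)\sigma^2$, which is strictly positive by the SIC ordering, so the sum rate is strictly increasing in every $q_j$.

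Combined with the feasibility chain $0=q_0\le q_1\le\cdots\le q_K\le P_{\text{max}}$, this monotonicity forces the sum-power constraint (5f) to bind, giving $q_K^\ast=P_{\text{max}}$, and it reduces the remaining problem to pushing each $q_j$ with $j<K$ to its largest feasible value. The binding upper bound on $q_j$ comes from the QoS requirement (5b) for user $j+1$, which after a direct manipulation of $R_{j+1}\ge \gamma_{j+1}$ reads
\[
q_j \;\le\; \frac{q_{j+1}}{2^{\gamma_{j+1}}}-\frac{2^{\gamma_{j+1}}-1}{2^{\gamma_{j+1}}}\cdot\frac{\sigma^2}{|c_{j+1}(\mathbf{v}_s)|^2},
\]
and this is easily seen to be tighter than the order constraint $q_j\le q_{j+1}$ for any positive rate threshold. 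Hence the QoS constraints for users $k=2,\ldots,K$ are all active at the optimum, and iterating this bound backward starting from $q_K^\ast=P_{\text{max}}$ produces closed forms for $q_{K-1}^\ast,\ldots,q_1^\ast$; converting via $p_k^\ast=q_k^\ast-q_{k-1}^\ast$ reproduces exactly the cascade in (7). The feasibility conditions $p_1^\ast\ge(2^{\gamma_1}-1)\sigma^2/|c_1(\mathbf{v}_s)|^2$ and $p_k^\ast>0$ for $k>1$ are then precisely the QoS of user~1 (the only constraint never consumed by the recursion) together with nonnegativity of the allocated powers.

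The only genuinely non-routine step is the monotonicity of $\sum_k R_k$ in each $q_j$; that is what upgrades the backward substitution from a greedy heuristic into a certified optimum. Everything else is algebraic bookkeeping driven by the fixed SIC decoding order.
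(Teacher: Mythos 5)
Your proof is correct, but it takes a more self-contained route than the paper. The paper's Appendix~A simply invokes the known single-cluster NOMA power-allocation structure (citing \cite{noma_oma}): each of the weaker users $k=2,\ldots,K$ receives exactly the minimum power meeting its rate target, the residual power goes to user~1, and the resulting triangular system of equations is solved to obtain \eqref{optimal_power}; the structural claim itself is not re-proved. You instead certify that structure from first principles: the cumulative-power reparameterization $q_k=\sum_{j\le k}p_j$, the telescoped sum-rate expression, and the derivative argument showing strict monotonicity in every $q_j$ (driven by $|c_j(\mathbf{v}_s)|^2>|c_{j+1}(\mathbf{v}_s)|^2$) together force the power budget and the QoS constraints of users $2,\ldots,K$ to bind, after which the backward recursion and the feasibility conditions ($p_1^*\ge(2^{\gamma_1}-1)\sigma^2/|c_1(\mathbf{v}_s)|^2$, $p_k^*>0$) drop out exactly as in the proposition. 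Your approach buys an explicit optimality certificate where the paper relies on a citation, at the cost of a slightly informal step (pushing each $q_j$ to its upper bound should be justified by noting that raising $q_j$ only relaxes user $j$'s own QoS and the ordering constraint, so the coordinate-wise maximal point remains feasible). One remark: your telescoping identity corresponds to the conventional downlink SIC rate $R_k=\log_2\bigl(1+p_k|c_k(\mathbf{v}_s)|^2/(\sum_{j<k}p_j|c_k(\mathbf{v}_s)|^2+\sigma^2)\bigr)$ rather than to the paper's literal expression \eqref{noma_expect_rate}, whose interference term reads $\sum_{j>k}p_j|c_j(\mathbf{v}_s)|^2$; however, the paper's own Appendix~A equations \eqref{equation_power} and the closed form \eqref{optimal_power} are precisely consistent with the convention you used (and you, like the paper, treat $\gamma_k$ as a threshold on $R_k$, absorbing the constant overhead factor), so this is an inconsistency internal to the paper rather than a gap in your argument.
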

\begin{proof}
	See Appendix A.
\end{proof}

%\textbf{Proposition 1} implies that given the effective channel power gains, the optimal power allocation at the BS can be obtained in a closed form. 

\emph{ii) Long-term STAR-RIS transmission- and reflection-coefficient vectors optimization:} 

For optimizing the STAR-RIS coefficients by exploiting the S-CSI, we first provide a closed-form expression for the expectation of the channel power gain $|c_k(\mathbf{v}_s)|^2$ .

\begin{lemma}
	\rm
	For the BTE protocol, the expected effective channel power gain of user $k$, i.e., $\mathbb{E}\{|c_k(\mathbf{v}_s)|^2\}$, is given by
	\begin{equation}\label{expect_channel}
		\mathbb{E}\{|c_k(\mathbf{v}_s)|^2\}=\mathbb{E}\left\lbrace |{h}_{k}+\mathbf{r}_{k}^H\boldsymbol{\Theta}_s\mathbf{g}|^2\right\rbrace =\delta_k+\epsilon_k\left|  {\bar{\mathbf{w}}_k^H\mathbf{v}_s}\right|^2+\zeta_k\sum_{m=1}^{M}\beta_m^s,
	\end{equation}
	where $\epsilon_k=\frac{\kappa_1\kappa_2{\delta_{BS}\delta_{Sk}}}{{({\kappa_1} + 1)(\kappa_2+1)}}$, $\zeta_k=\frac{(\kappa_1+\kappa_2+1){\delta_{BS}\delta_{Sk}}}{{({\kappa_1} + 1)(\kappa_2+1)}}$, and $\bar{\mathbf{w}}_k^H\triangleq{\bar{\mathbf{r}}_{k}^H{\rm{diag}}(\bar{\mathbf{g}})}\in\mathbb{C}^{1\times{M}}$ is defined as the cascaded LoS channel.
\end{lemma}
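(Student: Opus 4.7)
The plan is to expand the squared effective channel magnitude, take expectations term by term, and separate the cascaded channel $\mathbf{w}_k$ into its deterministic cascaded-LoS part and a residual zero-mean part whose per-element variance equals $\zeta_k$.

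First, I would write $|c_k(\mathbf{v}_s)|^2 = |h_k|^2 + 2\,\textrm{Re}\{h_k^{*}\,\mathbf{w}_k^H\mathbf{v}_s\} + |\mathbf{w}_k^H\mathbf{v}_s|^2$ and handle the three terms separately. Since $h_k = \sqrt{\delta_k}\,\hat h_k$ with $\hat h_k\sim\mathcal{CN}(0,1)$, we immediately have $\mathbb{E}\{|h_k|^2\}=\delta_k$. Because the direct channel $h_k$ is independent of the cascaded BS--STAR-RIS--user channel $\mathbf{w}_k$ and has zero mean, the cross term vanishes in expectation. The real work is therefore to evaluate $\mathbb{E}\{|\mathbf{w}_k^H\mathbf{v}_s|^2\}$.

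For that term, I would use $[\mathbf{w}_k^H]_m = r_{k,m}^{*}g_m$ together with the Rician decompositions in (1a)--(1b). Substituting those decompositions and expanding the product $r_{k,m}^{*}g_m$ produces four contributions: one deterministic LoS$\times$LoS term, two mixed LoS$\times$NLoS terms, and one NLoS$\times$NLoS term. Only the first is non-zero in expectation. Collecting the deterministic parts across $m$ yields $a_1a_2\,\bar{\mathbf{w}}_k^H\mathbf{v}_s$ with $a_1a_2 = \sqrt{\epsilon_k}$, which contributes $\epsilon_k|\bar{\mathbf{w}}_k^H\mathbf{v}_s|^2$ to the second moment. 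It remains to compute the variance of the residual zero-mean part $Z_m$ at each element.

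The key calculation is $\mathbb{E}\{|Z_m|^2\}$. Writing $Z_m$ as a sum of the three zero-mean cross terms and using the facts that $\hat g_m$ and $\hat r_{k,m}$ are independent $\mathcal{CN}(0,1)$ variables with $|\bar g_m|=|\bar r_{k,m}|=1$, the three cross-term expectations vanish and only the squared magnitudes survive, giving $\mathbb{E}\{|Z_m|^2\} = a_2^{2}b_1^{2}+a_1^{2}b_2^{2}+b_1^{2}b_2^{2}$; pulling out the common factor $\delta_{BS}\delta_{Sk}/[(\kappa_1+1)(\kappa_2+1)]$ collapses the three terms to $(\kappa_1+\kappa_2+1)$, i.e.\ exactly $\zeta_k$. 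Since the NLoS entries at distinct elements are mutually independent, $\mathbb{E}\{Z_mZ_{m'}^{*}\}=0$ for $m\neq m'$, so $\mathbb{E}\{|\sum_m Z_m[\mathbf{v}_s]_m|^2\} = \zeta_k\sum_m|[\mathbf{v}_s]_m|^2 = \zeta_k\sum_m\beta_m^s$. Summing the three contributions reproduces (8).

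I expect the main obstacle to be the bookkeeping in computing $\mathbb{E}\{|Z_m|^2\}$: one has to verify that every cross expectation between distinct LoS$\times$NLoS and NLoS$\times$NLoS products indeed vanishes (using independence and circular symmetry of the complex Gaussians), and that the three surviving variances combine cleanly into the single constant $\zeta_k$. The rest of the argument is essentially an orthogonality observation over $m$ together with the identity $|[\mathbf{v}_s]_m|^2=\beta_m^s$ that follows from the definition of $\mathbf{v}_s=\mathrm{diag}(\boldsymbol{\Theta}_s)$ under the mode-switching constraint (\ref{P_mode}).
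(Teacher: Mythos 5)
Your proposal is correct and follows essentially the same route as the paper's Appendix B: expand $|c_k(\mathbf{v}_s)|^2$, note that the direct-link cross term and all LoS/NLoS mixed expectations vanish by independence and zero mean, keep the deterministic LoS$\times$LoS term to get $\epsilon_k|\bar{\mathbf{w}}_k^H\mathbf{v}_s|^2$, and sum the surviving NLoS-related second moments to obtain $\zeta_k\sum_m\beta_m^s$. The only difference is presentational—you compute the residual variance element by element (using $\mathbb{E}\{|Z_m|^2\}=\zeta_k$ and cross-element orthogonality), whereas the paper evaluates the same quantities as vector quadratic forms via $\mathbb{E}\{\hat{\mathbf{g}}\hat{\mathbf{g}}^H\}=\mathbf{I}_M$ and $\boldsymbol{\Theta}_s\boldsymbol{\Theta}_s^H=\mathrm{diag}(\boldsymbol{\beta}_s)$.
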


\begin{proof}
	See Appendix B.
\end{proof}

Next, we focus on the average achievable rate. Since it is difficult to obtain a closed-form expression for $\mathbb{E}\{R_k^a\}$, we approximate it in the following lemma.
\vspace{-0.5em}
\begin{lemma}
	\rm For the BTE protocol, the expected achievable communication rate of user $k$ can be approximated as
	\begin{equation}\label{noma_expect_rate_approx}
		\mathbb{E}\{R_k^a\}\approx \left(1-\frac{K}{T_c}\right)\log_2\left( 1+\frac{p_k{\mathbb{E}\{|c_k(\mathbf{v}_s)|^2\}}}{\sum_{j>k}p_{j}\mathbb{E}\{|c_j(\mathbf{v}_s)|^2\}+\sigma^2}\right)\triangleq{\bar{R}_k}, 
	\end{equation}
\end{lemma}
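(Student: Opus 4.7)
The plan is to push the expectation inside the logarithm by invoking the standard mean-value approximation $\mathbb{E}\{\log_2(X)\}\approx \log_2(\mathbb{E}\{X\})$ applied to the numerator and denominator of the SINR-like ratio, and then to substitute the closed-form expressions for $\mathbb{E}\{|c_k(\mathbf{v}_s)|^2\}$ obtained in Lemma 1.

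First, I would rewrite the instantaneous rate $R_k^a$ as a difference of two logarithms, namely
\begin{equation*}
R_k^a = \left(1-\tfrac{K}{T_c}\right)\bigl[\log_2 A_k - \log_2 B_k\bigr],
\end{equation*}
with $A_k\triangleq \sum_{j\geq k} p_j |c_j(\mathbf{v}_s)|^2 + \sigma^2$ and $B_k\triangleq \sum_{j> k} p_j |c_j(\mathbf{v}_s)|^2 + \sigma^2$. Both $A_k$ and $B_k$ are positive affine combinations of the random quantities $\{|c_j(\mathbf{v}_s)|^2\}_{j}$, so by linearity of expectation, $\mathbb{E}\{A_k\}$ and $\mathbb{E}\{B_k\}$ are exactly the quantities appearing inside the logarithm of (\ref{noma_expect_rate_approx}) with $\mathbb{E}\{|c_j(\mathbf{v}_s)|^2\}$ given by Lemma 1.

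Next, I would apply the approximation $\mathbb{E}\{\log_2 X\}\approx\log_2\mathbb{E}\{X\}$ separately to $\log_2 A_k$ and $\log_2 B_k$. A convenient justification is a second-order expansion: writing $X=\mathbb{E}\{X\}+\tilde X$ and Taylor-expanding $\log_2(\cdot)$ around $\mathbb{E}\{X\}$ yields $\mathbb{E}\{\log_2 X\}=\log_2\mathbb{E}\{X\}-\frac{\text{Var}(X)}{2\ln 2\,(\mathbb{E}\{X\})^2}+O\!\bigl((\tilde X/\mathbb{E}\{X\})^3\bigr)$, so the error vanishes whenever the relative variance of $A_k,B_k$ is small. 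Under the LoS-dominant regime targeted by the BTE protocol (large Rician factors and/or large $M$), the NLoS contributions to $\mathbf{r}_k^H\boldsymbol{\Theta}_s\mathbf{g}$ average out and $|c_k(\mathbf{v}_s)|^2$ concentrates around its mean, so this relative variance is indeed small; this is exactly the regime in which the approximation is intended to be invoked.

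Finally, combining the two approximated logarithms via $\log_2 A_k - \log_2 B_k = \log_2(A_k/B_k)$ and substituting Lemma 1 yields the desired expression $\bar R_k$. The main obstacle is the justification step: the manipulation itself is routine, but one must argue that the substitution of expectations inside the $\log$ introduces a negligible error. The cleanest way is to note that Jensen's inequality already gives $\mathbb{E}\{\log_2(1+\mathrm{SINR}_k)\}\leq \log_2(1+\mathbb{E}\{\mathrm{SINR}_k\})$ (after a further standard ratio approximation), so $\bar R_k$ can be regarded as a tractable surrogate that becomes tight in the concentration regime — this is precisely the operating regime assumed for the BTE design, and it is this regime that makes the use of $\bar R_k$ as a design objective well-motivated.
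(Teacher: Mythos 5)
Your proposal is correct and matches the intended argument: the paper itself omits the proof, deferring to [Theorem 1, \cite{huameng_uav}], and that reference's justification is exactly the route you take — write the rate as $\log_2 A_k-\log_2 B_k$, use linearity of expectation on the affine combinations of $\{|c_j(\mathbf{v}_s)|^2\}$, and invoke the mean-value/Taylor (Jensen-type) approximation $\mathbb{E}\{\log_2 X\}\approx\log_2\mathbb{E}\{X\}$, whose error is controlled by the relative variance and is small in the concentration (LoS-dominant) regime where the BTE protocol operates. Your added remark that $\bar{R}_k$ acts as a tractable surrogate/upper-bound-like quantity is consistent with the paper's own numerical observation in Section V that the approximated sum-rate slightly exceeds the simulated one.
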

\vspace{-1em}
\begin{proof}
	The proof is similar to that of [Theorem 1, \cite{huameng_uav}] and hence is omitted for brevity.
\end{proof}

For the long-term optimization problem, directly substituting the optimal power allocation in \eqref{optimal_power} into (\ref{noma_expect_rate_approx}) will result in complicated rate expressions. To address this difficulty, we introduce a set of auxiliary variables $\{\bar{p}_k\}$ that denote the reference power allocation policy based on the S-CSI. Note that $\{\bar{p}_k\}$ are used for designing the STAR-RIS coefficient vectors in the long-term optimization only, while the actual BS power allocation policy in each coherence block is given by \eqref{optimal_power}.

Based on the above approximations and introduced auxiliary variables, the long-term optimization problem can be approximated as follows:
\begin{subequations}
	\begin{align}
		(\text{P2}): \mathop  {\max}\limits_{\{\mathbf{v}_s,\bar{p}_k\}}\;\; &\sum_{k=1}^K\bar{R}_k \\
		\label{P1_rate}{\rm{s.t.}}\;\;	
		&\bar{R}_k \geq \gamma_k,\forall{k\in\mathcal{K}},\\
		\label{P1_do}&\mathbb{E}\{|c_k(\mathbf{v}_s)|^2\}>\mathbb{E}\{|c_j(\mathbf{v}_s)|^2\}, \forall{k<j}, \\
		&\sum_{k=1}^K{\bar{p}_k}\leq{P_{\text{max}}},\\
		&\rm \eqref{P_energy},\eqref{P_mode}.\nonumber
	\end{align}
\end{subequations}
The main challenges for solving (P2) lie in the non-convex objective function as well as the non-convex constraints \eqref{P_mode}, \eqref{P1_rate} and \eqref{P1_do}. To tackle these difficulties, we propose an efficient two-layer penalty-based algorithm. Specifically, in the inner layer, we solve a penalized optimization problem by applying the alternating optimization (AO) method, while in the outer layer, the penalty coefficient is updated until convergence is achieved. 

\emph{1) Inner Layer Iteration:}
First, in the inner layer, the original problem (P2) is decomposed into two subproblems: BS transmit power optimization and transmission- and reflection-coefficient vectors optimization.

\textbf{Optimize} $\bar{p}_k$ \textbf{for given} $\mathbf{v}_s$:  For given $\mathbf{v}_s$, the optimal BS power allocation can be obtained via \eqref{optimal_power} by substituting $|c_k(\mathbf{v}_s)|^2$ with $\mathbb{E}\{|c_k(\mathbf{v}_s)|^2\}$.

\textbf{Optimize} $\mathbf{v}_s$ \textbf{for given} $\bar{p}_k$: To tackle the binary constraint (\ref{P_mode}), we first transform it equivalently into the following equality constraint:
\begin{equation}\label{con_binary}
	|[\mathbf{v}_s]_m|(1-|[\mathbf{v}_s]_m|)=0,\forall{s\in\{t,r\},m\in\mathcal{M}}. 
\end{equation}
As $|[\mathbf{v}_s]_m|\leq1$, we can readily observe that $|[\mathbf{v}_s]_m|-|[\mathbf{v}_s]_m|^2\geq0$ and the equality holds only when $|[\mathbf{v}_s]_m|$ is either 0 or 1. Hence, the constraint (\ref{con_binary}) is satisfied only for binary variables. We then add the left-hand-side (LHS) of (\ref{con_binary}) as a penalty term into the objective function, yielding the following optimization problem:
\begin{subequations}
	\begin{align}
		(\text{P2.1}): \mathop  {\max}\limits_{\{\mathbf{v}_s\}}\;\; &\sum_{k=1}^K\bar{R}_k-\eta\sum_{m=1}^M\sum_{s\in\{t,r\}}(|[\mathbf{v}_s]_m|-|[\mathbf{v}_s]_m|^2) \\
		{\rm{s.t.}}\;\;	
		&\eqref{P_energy},\eqref{P1_rate},\eqref{P1_do}, \nonumber
	\end{align}
\end{subequations}
where $\eta>0$ is the penalty factor used to penalize the objective function if $|[\mathbf{v}_s]_m|\in(0,1)$. However, the added penalty term in the objective function is non-convex. To address it, we approximate this term by employing the first-order Taylor series (FTS). Specifically, an upper-bound for the penalty term is obtained as follows:
\vspace{-0.5em}
\begin{equation}
	|[\mathbf{v}_s]_m|-|[\mathbf{v}_s]_m|^2\leq|[\mathbf{v}_s]_m|-\left[2\text{Re}([\widetilde{\mathbf{v}}_s]_m^*[\mathbf{v}_s]_m)-|[\widetilde{\mathbf{v}}_s]_m|^2\right]\triangleq{f_0(\mathbf{v}_s)}.
\end{equation}
By replacing the penalty term with $f_0(\mathbf{v}_s)$, problem (P2.1) can be transformed into
\begin{subequations}
	\begin{align}
		(\text{P2.2}): \mathop  {\max}\limits_{\{\mathbf{v}_s\}}\;\; &\sum_{k=1}^K\bar{R}_k-\eta\sum_{m=1}^M\sum_{s\in\{t,r\}}f_0(\mathbf{v}_s) \\
		{\rm{s.t.}}\;\;	
		&\eqref{P_energy},\eqref{P1_rate},\eqref{P1_do}.\nonumber
	\end{align}
\end{subequations}

Next, to tackle the non-convex objective function and constraint (\ref{P1_rate}), we first introduce auxiliary variables $\{\chi_k\}$ and substitute them into (P2.2). By doing so, the optimization problem can be equivalently expressed as
\vspace{-0.5em}
\begin{subequations}
	\begin{align}
		(\text{P2.3}):\mathop  {\max}\limits_{\{\mathbf{v}_s,\chi_k\}}\;\;  &\sum_{k=1}^K\log_2(1+\chi_k)-\eta\sum_{m=1}^M\sum_{s\in\{t,r\}}f_0(\mathbf{v}_s)  \\
		\label{P2_rate}{\rm{s.t.}}\;\;	
		&\log_2(1+\chi_k) \geq \gamma_k,\forall{k},\\
		\label{P2_auxiliary}&\chi_k\leq\frac{p_k{\mathbb{E}\{c_k(\mathbf{v}_s)\}}}{\sum_{j>k}p_{j}\mathbb{E}\{c_j(\mathbf{v}_s)\}+\sigma^2},\forall{k},\\
		&\eqref{P_energy},\eqref{P1_do}.\nonumber
	\end{align}
\end{subequations}
However, the constraints (\ref{P2_auxiliary}) and (\ref{P1_do}) are still non-convex. To address it, we first rewrite constraint (\ref{P2_auxiliary}) as
\begin{equation}\label{P2_auxiliary_rewrite}
	\frac{p_k(l_k+{\epsilon_k\left|  {\bar{\mathbf{w}}_k^H\mathbf{v}_s}\right|^2+\zeta_k\sum_{m=1}^{M}|[\mathbf{v}_s]_m|^2})}{\chi_k}\geq{\sum_{j>k}p_{j}\mathbb{E}\{c_j(\mathbf{v}_s)\}+\sigma^2}.
\end{equation}
Notice that in the LHS of (\ref{P2_auxiliary_rewrite}), $\frac{\left| {\bar{\mathbf{w}}_k^H\mathbf{v}_s}\right|^2}{\chi_k}$ and $\frac{|[\mathbf{v}_s]_m|^2}{\chi_k}$  are quadratic-over-affine functions, and hence are jointly convex over the involved variables. Besides, $\frac{1}{\chi_k}$ is convex with respect to $\chi_k$.  Thus, we approximate the non-convex terms in the LHS of \eqref{P2_auxiliary_rewrite} by the successive convex approximation (SCA) method. Specifically, according to the FTS, these terms can be approximated at given local points $\widetilde{\mathbf{v}}_s,\widetilde{\chi}_k$ as 
\vspace{-0.5em}
\begin{subequations}
	\begin{align}
		&\chi_k\geq\frac{2}{\widetilde{\chi}_k}-\frac{\chi_k}{(\widetilde{\chi}_k)^2}\triangleq{f_1(\chi_k)},\\
		&\frac{\left|  {\bar{\mathbf{w}}_k^H\mathbf{v}_s}\right|^2}{\chi_k}\geq\frac{2\text{Re}(\widetilde{\mathbf{v}}_s^H{\bar{\mathbf{w}}_k\bar{\mathbf{w}}_k^H}\mathbf{v}_s)}{\widetilde{\chi}_k}-\left(\frac{|\bar{\mathbf{w}}_k^H\widetilde{\mathbf{v}}_s|^2}{\widetilde{\chi}_k}\right)^2\chi_k\triangleq{f_2}(\mathbf{v}_s,\chi_k),\\
		&\sum_{m=1}^{M}\frac{|[\mathbf{v}_s]_m|^2}{\chi_k}\geq\sum_{m=1}^{M}\frac{2\text{Re}([\widetilde{\mathbf{v}}_s]_m^*[\mathbf{v}_s]_m)}{\widetilde{\chi}_k}-\sum_{m=1}^{M}\left(\frac{|[\widetilde{\mathbf{v}}_s]_m|^2}{\widetilde{\chi}_k}\right)^2\chi_k\triangleq{f_3}(\mathbf{v}_s,\chi_k).
	\end{align}
\end{subequations}

Finally, for handling the non-convex constraint (\ref{P1_do}), the LHS can also be replaced by its lower-bound, which is given by
\begin{equation}
	\mathbb{E}\{c_k(\mathbf{v}_s)\}\geq{l_k+\epsilon_k\left[2\text{Re}(\widetilde{\mathbf{v}}_s^H{\bar{\mathbf{w}}_k\bar{\mathbf{w}}_k^H}\mathbf{v}_s)-|\bar{\mathbf{w}}_k^H\widetilde{\mathbf{v}}_s|^2\right]+\zeta_k\sum_{m=1}^{M}\left[2\text{Re}([\widetilde{\mathbf{v}}_s]_m^*[\mathbf{v}_s]_m)-|[\widetilde{\mathbf{v}}_s]_m|^2\right]}\triangleq{f_{4,k}}(\mathbf{v}_s).
\end{equation}

Based on the above approximations, (P2.3) can be transformed into the following problem:
\vspace{-0.2em}
\begin{subequations}\label{p1.4}
	\begin{align}
		(\text{P2.4}):\mathop  {\max}\limits_{\{\mathbf{v}_s,\chi_k\}}\;\;  &\sum_{k=1}^K\log_2(1+\chi_k)-\eta\sum_{m=1}^M\sum_{s\in\{t,r\}}f_0(\mathbf{v}_s)  \\
		\label{dddd}{\rm{s.t.}}\;\;	
		&\log_2(1+\chi_k) \geq \gamma_k,\forall{k},\\
		\label{dddddd}&p_k\left[l_k{f_1(\chi_k)}+\epsilon_k{f_2}(\mathbf{v}_s,\chi_k)+\zeta_k{f_3}(\mathbf{v}_s,\chi_k)\right]\geq{\sum_{j>k}p_{j}\mathbb{E}\{c_j(\mathbf{v}_s)\}+\sigma^2},\forall{k},\\
		&{f_{4,k}}(\mathbf{v}_s)\geq\mathbb{E}\{c_j(\mathbf{v}_s)\},\forall{k<j},\\
		&\eqref{P_energy}.\nonumber
	\end{align}
\end{subequations}

It is easy to verify that problem (P2.4) is a convex optimization problem, which thus can be efficiently solved via standard solvers such as CVX\cite{cvx}.

\emph{2) Outer Layer Iteration:}
In the outer layer, we gradually increase the value of $\eta$ as $\eta=\omega\eta$ with $\omega>1$. The iteration of the outer loop terminates when the violation of the equality constraint is within a predefined accuracy, i.e., 
\begin{equation}
	\max\left\lbrace |[\mathbf{v}_s]_m|-|[\mathbf{v}_s]_m|^2,\forall{m\in\mathcal{M},s\in\{t,r\}}\right\rbrace \leq\epsilon.
\end{equation}

\begin{algorithm}[!t]\label{method1}
	\caption{Proposed penalty-based iterative algorithm for solving problem (P2)}
	\begin{algorithmic}[1]
		\STATE {Initialize feasible points $\left\{ \mathbf{v}_s^{(n)}, \chi_k^{(n)} \right\}$ with $n=0$, the penalty factors $\eta$.}
		\STATE {\bf repeat: outer loop}
		%\STATE \quad Set iteration index $n=0$ for inner loop.
		\STATE \quad {\bf repeat: inner loop}
		\STATE \quad\quad For given $\left\{  \mathbf{v}_s^{(n)}, \chi_k^{(n)}\right\}$, obtain the power allocation $\bar{p}_k^{(n+1)}$ via \eqref{optimal_power}.
		\STATE \quad\quad For given $\bar{p}_k^{(n+1)}$, update $\left\{  \mathbf{v}_s^{(n
			+1)}, \chi_k^{(n+1)}\right\}$ by solving \eqref{p1.4}.
		\STATE \quad\quad  $n=n+1$.
		\STATE \quad {\bf until} the maximum number of inner iterations $n_{\max}$ is reached.
		%\STATE \quad Update $\left\{ {{\mathbf{Q}}_k^{{\rm{MS}\left( 0 \right)}},{\mathbf{W}}_k^{\left( 0 \right)}}, {{\bm{\beta}} _k^{\left( 0 \right)}} \right\}$ with the current solutions $\left\{ {{\mathbf{Q}}_k^{{\rm{MS}}\left( n \right)},{\mathbf{W}}_k^{\left( n \right)}}, {{\bm{\beta}} _k^{\left( n \right)}} \right\}$.
		\STATE \quad Update the penalty factor as $\eta  = \omega \eta $.
		\STATE {\bf until} the constraint violation is below a predefined threshold $\epsilon>0$.
	\end{algorithmic}
\end{algorithm}

The overall algorithm for solving (P2) is summarized in \textbf{Algorithm 1}. Since the value of the objective function is bounded and non-decreasing in each iteration of the inner loop, the proposed penalty-based iterative algorithm is guaranteed to convergence as $\eta$ grows sufficiently large. Next, we analyze its computational complexity. Specifically, let $I_{\text{out}}$ and $I_{\text{in}}$ denote the number of outer and inner iterations. As solving the inner problem requires the computation complexity order of $\mathcal{O}((KM)^{3.5})$, the whole computational complexity order of \textbf{Algorithm 1} is given by $\mathcal{O}(I_{\text{out}}I_{\text{in}}(KM)^{3.5})$\cite{convex}.

%\subsection{Convergence and Complexity Analysis}
\vspace{-0.5em}
\begin{remark}
\rm It is worth mentioning that the performance of the proposed BTE protocol is affected by the Rician factors, as explained below. From (\ref{expect_channel}), it is observed that the expected effective channel power gains (i.e., $\mathbb{E}\{|c_k(\mathbf{v}_s)|^2\}$) are determined by $\{{\mathbf{v}}_s\}$ and the Rician factors $\kappa_{1},\kappa_{2}$.
Specifically, when $\kappa_{1},\kappa_{2}$ are large (corresponding to the case of LoS-dominant channels), $\mathbb{E}\{|c_k(\mathbf{v}_s)|^2\}$ is proportional to $\left|\bar{\mathbf{w}}_k^H\mathbf{v}_s\right|^2$.
On the other hand, when $\kappa_{1},\kappa_{2}\rightarrow0$ (corresponding to Rayleigh fading), the second term in (\ref{expect_channel}) (i.e., $\epsilon_k\left|  {\bar{\mathbf{w}}_k^H\mathbf{v}_s}\right|^2$) will have little impact on the channel power gains since $\epsilon_k\rightarrow0$. In this case, the effective channel power gains are mainly determined by the direct channels and the number of transmission/reflection elements, whereas the STAR-RIS phase-shifts do not affect the system performance. In other words, the STAR-RIS serves as a random scatterer, whose phase-shifts can be set arbitrarily. Therefore, for the rich scattering environment, the BTE protocol may be ineffective since the system performance gain brought by STAR-RIS is limited by the dominant NLoS channel components. This motivates us to devise an alternative scheme in the next section to fully exploit the aperture gain of STAR-RIS when the Rician factors are small.	
	%offline
\end{remark}

\vspace{-1em}
\section{TTS Transmission Protocol for Rich Scattering Environments}
In this section, we consider a rich scattering environment with small Rician fading factors. For this scenario, we propose a customized TTS transmission protocol and devise an efficient algorithm to design the STAR-RIS coefficients and BS power allocation.
\vspace{-1em}
\subsection{Proposed PTE Transmission Protocol}

One key observation is that if each STAR-RIS element is allocated for serving one user, the instantaneous channel estimation overhead will be greatly reduced. Leveraging this observation, we propose a new PTE transmission protocol as follows. In the first phase, the BS acquires the large-scale path-loss information (i.e., $\delta_{BS},\delta_{Sk},\delta_{k}$). Based on this S-CSI, the BS determines the STAR-RIS surface-partition strategy where the STAR-RIS elements are divided into several transmission/reflection 
subsurfaces, each of which is responsible to assist the data transmission of one specific T/R user. As such, the surface-partition strategy is settled and remain unchanged for the subsequent time slots. In the second phase, for each channel coherence block, the BS estimates \textit{the direct and cascaded subsurface channels} (referred to as the partial I-CSI) for each user instead of the full I-CSI and then optimizes its power allocation as well as the STAR-RIS phase-shifts. In the third phase, the BS proceeds with data transmission.

\begin{figure}[t!]
	\centering
	\includegraphics[width=1\textwidth]{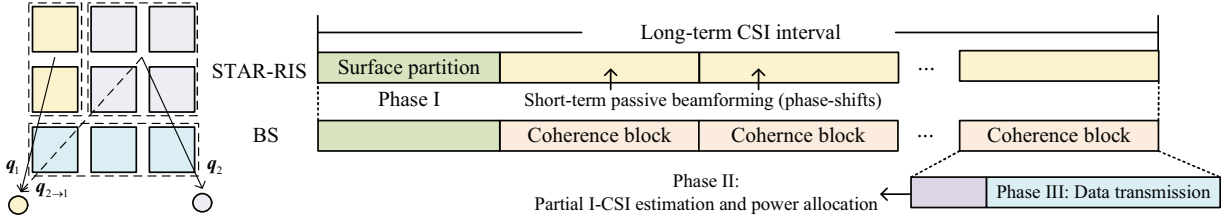}\\
	\caption{Illustration of the proposed PTE protocol.}\vspace{1em}
\end{figure}

\vspace{-1em}
\subsection{Problem Formulation}

As introduced before, we target to partition the STAR-RIS elements into $K$ subsurfaces for sum-rate maximization. Assume that the $k$-th subsurface containing $M_k$ adjacent elements is allocated to user $k$. Accordingly, we have $\sum_{k=1}^{K}M_k=M$ and the indexes of the STAR-RIS elements associated with user $k$ are denoted by $\mathcal{M}_k=\{1,...,M_k\}$. Let $\mathbf{q}_k^H\in\mathbb{C}^{1\times{M_k}}$ denote the cascaded BS to user $k$ channel through the $k$-th subsurface. For each channel coherence block, it takes $M$ pilot symbols for estimating all the \textit{cascaded subsurface channels} $\{\mathbf{q}_k\}$ and $K$ symbols for estimating all the direct channels $\{h_k\}$. Hence, the instantaneous  channel estimation overhead for the PTE protocol is $M+K$, which is much smaller than for the full I-CSI based protocol (i.e., $KM+K$). Further, we assume that these partial I-CSI can be acquired via existing channel estimation methods (see e.g., \cite{ce_survey,discrete_jsac}), and is perfectly known at the BS. Considering the channel estimation overhead, the average achievable rate of user $k$ is expressed as $R_k^b=(1-\frac{M+K}{T_c})R_k$.

For the PTE protocol, we aim to maximize the average sum-rate by optimizing the long-term STAR-RIS surface-partition strategy based on large-scale path-loss information, and the short-term STAR-RIS phase-shifts and BS power allocation based on the estimated partial I-CSI. This problem can be formulated as follows:
\begin{subequations}
	\begin{align}
		(\text{P3}):\; { \mathop{\max}\limits_{\{M_k\}}}\;&\mathbb{E}\left\lbrace  {\mathop{\max}\limits_{\{p_k,\boldsymbol{\theta}_s\}}}\; \sum_{k=1}^K{R_k^b} \right\rbrace  \\
		{\rm{s.t.}}\;\;	
		&{R}_k^{b} \geq \gamma_k,\forall{k\in\mathcal{K}},\\
		&|c_k(\mathbf{v}_s)|^2>|c_j(\mathbf{v}_s)|^2, \forall{k<j},  \\
		&|\theta_m^s|=1,\forall{m,s},\\
		&\sum_{k=1}^K{p_k}\leq{P_{\text{max}}}.
	\end{align}
\end{subequations}

\subsection{Proposed Solution to Problem (P3)}

\emph{i) Short-term BS power allocation and STAR-RIS phase-shifts design:}

For the short-term optimization problem, we first propose a suboptimal yet low-complexity phase-shift design that is to align the $k$-th cascaded subsurface channel with the direct link of user $k$.\footnote{This phase-shift design may cause a loss in terms of STAR-RIS beamforming gain but is practically efficient since it eliminates the need to estimate the full I-CSI of the whole surface.} %This design is also practically efficient since it eliminates the need to estimate the full I-CSI of the STAR-RIS.
Specifically, the $n$-th phase-shift of subsurface $k$ is set as 
\begin{equation}\label{opt_phase}
	\theta_n^k=\angle{h_{k}}-\angle[\mathbf{q}_{k}]_n,\forall{n}\in\mathcal{M}_k,
\end{equation}
%is difficult to solve since the full I-CSI is unknown. To address this issue, we 

For NOMA transmission, the received superimposed signal at user $k$ is given by
\begin{equation}
	y_{k}=c_k \sum_{k=1}^{K}\sqrt{p_k}s_k+n_{k},
\end{equation}
with
\begin{equation}
	c_k=|h_{k}|+\sum_{n=1}^{M_k}|[\mathbf{q}_{k}]_n|+\sum_{j\atop{j\neq{k},j,k\in\mathcal{K}_s}}\sum_{n=1}^{M_j}[\mathbf{q}_{j\rightarrow{k}}]_n\overset{(a)}\approx |h_{k}|+\sum_{n=1}^{M_k}|[\mathbf{q}_{k}]_n|,
\end{equation}
where $\mathbf{q}^H_{j\rightarrow{k}}\in\mathbb{C}^{1\times{M_j}}$ denotes the cascaded channel between the $j$-th subsurface and user $k$.
The approximation in $(a)$ is follows by neglecting the signals transmitted/reflected by other subsurfaces that are not assigned to user $k$ (i.e., the third term in $c_k$). In practical scenarios, the T/R users are exposed to all transmission/reflection elements. For example, for user $k\in\mathcal{K}_t$, the subsurfaces allocated to other users $j\in\mathcal{K}_t$ ($j\neq{k}$) will also scatter signals, which can be combined either constructively or destructively with user $k$'s desired signals. Fortunately, our approximation is practically accurate since the effect of the scattered signals is insignificant, especially when $\kappa_{1}$ and $\kappa_{2}$ are small, which will be numerically verified in Section V.

Next, regarding $c_k$ as the effective channel of each user, the BS determines the short-term power allocation based on \textbf{Proposition 1} after estimating the partial I-CSI.

\emph{ii) Long-term surface-partition optimization:}

Next, we optimize the long-term surface-partition for the average sum-rate maximization. Applying the phase-shift design in \eqref{opt_phase}, the expected effective channel power gains $\mathbb{E}\{(c_k)^2\}$ are characterized as follows.
\vspace{-0.5em}
\begin{lemma}
For the PTE protocol, the expected effective channel power gain of user $k$ is given by
	\begin{equation}\label{expected_channel_gain_scheme2}
		\begin{aligned}
			\mathbb{E}\{(c_k)^2\}=&
			\delta_k+\delta_{BS}\delta_{Sk}{M_k}+\frac{\pi^2\delta_{BS}\delta_{Sk}M_k(M_k-1)}{16(\kappa_{1}+1)(\kappa_{2}+1)}\left(L_{\frac{1}{2}}(-\kappa_{1})L_{\frac{1}{2}}(-\kappa_{2})\right)^2+\\
			&\sqrt{\frac{\pi^3\delta_{Bk}\delta_{BS}\delta_{Sk}}{16(\kappa_{1}+1)(\kappa_{2}+1)}}L_{\frac{1}{2}}(-\kappa_{1})L_{\frac{1}{2}}(-\kappa_{2}){M_k},
		\end{aligned}	
	\end{equation}
	where $L_{\frac{1}{2}}(\cdot)$ denotes the Laguerre polynomial.
\end{lemma}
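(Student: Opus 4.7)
The plan is to expand the square $(c_k)^2$ from the approximate expression for $c_k$, then reduce everything to first- and second-order absolute moments of the direct and the per-element cascaded channels, which follow from well-known Rayleigh and Rician distributional formulas.

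First, I would write
\begin{equation*}
(c_k)^2 = |h_k|^2 + 2|h_k|\sum_{n=1}^{M_k}|[\mathbf{q}_k]_n| + \Bigl(\sum_{n=1}^{M_k}|[\mathbf{q}_k]_n|\Bigr)^2,
\end{equation*}
and take expectation. Because $h_k$ is independent of $(\mathbf{g},\mathbf{r}_k)$, the cross term factorizes into $2\mathbb{E}\{|h_k|\}\cdot M_k\,\mathbb{E}\{|[\mathbf{q}_k]_n|\}$. For the last term, expanding the square and using the fact that the NLoS random parts of $[\mathbf{g}]_n,[\mathbf{g}]_m$ (resp.\ $[\mathbf{r}_k]_n,[\mathbf{r}_k]_m$) are independent across elements, while the LoS parts are absorbed into each $[\mathbf{r}_k]_n^*[\mathbf{g}]_n$, gives $M_k\mathbb{E}\{|[\mathbf{q}_k]_n|^2\} + M_k(M_k-1)(\mathbb{E}\{|[\mathbf{q}_k]_n|\})^2$.

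Second, I would reduce $|[\mathbf{q}_k]_n| = |[\mathbf{r}_k]_n|\cdot|[\mathbf{g}]_n|$ and exploit independence of $\mathbf{g}$ and $\mathbf{r}_k$ to factorize both moments. The second moments are immediate from the Rician normalization: $\mathbb{E}\{|[\mathbf{g}]_n|^2\}=\delta_{BS}$, $\mathbb{E}\{|[\mathbf{r}_k]_n|^2\}=\delta_{Sk}$, and $\mathbb{E}\{|h_k|^2\}=\delta_k$, while $\mathbb{E}\{|h_k|\}=\tfrac{1}{2}\sqrt{\pi\delta_k}$ follows from the Rayleigh mean. Together these already account for the $\delta_k$ and $\delta_{BS}\delta_{Sk}M_k$ contributions in the claimed formula.

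Third, the main computation is the Rician magnitude mean. Since $[\mathbf{g}]_n$ is a complex Gaussian with deterministic mean $\sqrt{\delta_{BS}\kappa_1/(\kappa_1+1)}\,[\bar{\mathbf{g}}]_n$ (with $|[\bar{\mathbf{g}}]_n|=1$) and circular NLoS component of variance $\delta_{BS}/(\kappa_1+1)$, its magnitude is Rice-distributed with noncentrality $\nu^2=\delta_{BS}\kappa_1/(\kappa_1+1)$ and scale $\sigma^2=\delta_{BS}/(2(\kappa_1+1))$, so $\nu^2/(2\sigma^2)=\kappa_1$. The standard Laguerre-polynomial identity for the Rician mean then yields
\begin{equation*}
\mathbb{E}\{|[\mathbf{g}]_n|\}=\sigma\sqrt{\pi/2}\,L_{\frac{1}{2}}(-\kappa_1)=\frac{\sqrt{\pi\delta_{BS}}}{2\sqrt{\kappa_1+1}}L_{\frac{1}{2}}(-\kappa_1),
\end{equation*}
and, symmetrically, $\mathbb{E}\{|[\mathbf{r}_k]_n|\}=\frac{\sqrt{\pi\delta_{Sk}}}{2\sqrt{\kappa_2+1}}L_{\frac{1}{2}}(-\kappa_2)$. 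Multiplying produces $\mathbb{E}\{|[\mathbf{q}_k]_n|\}$, whose square delivers the $M_k(M_k-1)$ term of the claim, and multiplying by $2\mathbb{E}\{|h_k|\}M_k$ delivers the square-root term.

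The main obstacle will be correctly identifying the Rician moment formula and being careful with the factor $1/(\kappa_{1,2}+1)$ that separates the variance of the circular NLoS part from the total channel variance; once these per-element moments are in hand, the rest is a direct substitution and algebraic simplification that assembles the four terms of the claimed expression.
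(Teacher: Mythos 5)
Your proposal is correct and follows essentially the same route as the paper's Appendix C: binomial expansion of $(c_k)^2$, factorization of the cross and off-diagonal terms via independence, and evaluation of the Rayleigh and Rician magnitude means through the Laguerre-polynomial formula. In fact your explicit per-element computation $\mathbb{E}\{|[\mathbf{g}]_n|\}=\frac{\sqrt{\pi\delta_{BS}}}{2\sqrt{\kappa_1+1}}L_{\frac{1}{2}}(-\kappa_1)$ (and its counterpart for $\mathbf{r}_k$) is slightly more careful than the paper's intermediate step, which omits the square roots on $\delta_{BS}\delta_{Sk}$ and $(\kappa_1+1)(\kappa_2+1)$ even though the final stated result in the lemma is consistent with your (correct) values.
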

\vspace{-1em}
\begin{proof}
	See Appendix C.
\end{proof}

%For example, the received signal of R user 2 is
%\begin{equation}
%	\begin{aligned}
	%		y_{2,r}=({h}_{2,r}^H+\mathbf{r}_{2,2,r}^H\boldsymbol{\Theta}_{2,r}\mathbf{g})(\sqrt{P_t}s_t+\underbrace{\sqrt{P_{2,r}}s_{2,r}}_{\text{user interference}})
	%		\\+\underbrace{(\mathbf{r}_{1,2,r}^H\boldsymbol{\Theta}_{1,r}\mathbf{g})\sqrt{P}s}_{\text{subsurface interference}}+n,	
	%	\end{aligned}
%\end{equation}
%where $\mathbf{r}_{i,j,r}$ denotes the channel of the $i$-th subsurface of the STAR-RIS to user $j$.

Based on \textbf{Lemma 3}, the expected achievable rate of user $k$ can be approximated as
\begin{equation}\label{rate_scheme2}
	\mathbb{E}(R_k^b)\approx\left(1-\frac{M+K}{T_c}\right)\log_2\left( 1+\frac{p_k{\mathbb{E}\{(c_k)^2\}}}{\sum_{j>k}p_{j}\mathbb{E}\{(c_j)^2\}+\sigma^2}\right)\triangleq {\breve{R}_k}.
\end{equation}

Therefore, the long-term surface-partition optimization problem can be approximated as 
\begin{subequations}
	\begin{align}
		(\text{P4}): &\mathop  {\max}\limits_{\{M_k\}}\;\; \sum_{k=1}^{K} \breve{R}_k \\
		{\rm{s.t.}}\;\;	
		%&\gamma_k\geq{r_k}\\
		& \breve{R}_k \geq \gamma_k,\\
		\label{P2_do}&\mathbb{E}\{(c_k)^2\}>\mathbb{E}\{(c_j)^2\}, \forall{k<j}.
	\end{align}
\end{subequations}
Problem (P4) is a non-convex mixed-integer non-linear programming (MINLP), which is generally hard to solve. Although the optimal solution can be obtained by enumerating all possible surface-partition strategies, this exhaustive-search based algorithm entails a prohibitively high computational complexity, which is unaffordable when $M$ is large. Moreover, the conventional integer relaxation and rounding approach can not be applied, since the relaxed problem is still not convex. To tackle these issues, we propose an efficient two-step heuristic algorithm to suboptimally solve problem (P4), whose details are introduced as follows.

\emph{1) Initial STAR-RIS surface-partition under rate and decoding-order constraints:} It is expected that NOMA transmission benefits more from the multiplexing gain when the channel conditions of users are more distinctive. Hence, in order to improve the NOMA gain, it is desirable to allocate more elements to the users with higher decoding orders to create asymmetric channel conditions. However, with this strategy, the weak users' rate requirements may not be met if their effective channel power gains are too small. To tackle this, in the first step, we aim to determine the minimum number of elements allocated to each user for satisfying their rate requirements and the decoding order constraints. Specifically, we first select a small number of elements (e.g., the minimum number of elements in a group) to be assigned to user $K$ (the weakest user), denoted by $M_K^{(0)}$. Based on $M_K^{(0)}$, we obtain the minimum number of elements assigned to other users, i.e.,  $\{M_{k}^{(0)}\}$ that satisfy $\mathbb{E}\{|c_k|^2\}>\mathbb{E}\{|c_{k+1}|^2\},\forall k<K$; while the remaining elements are allocated to user 1 which is of the highest decoding order. Next, we decide whether the rate requirements are satisfied under the current surface-partition strategy by checking the feasibility constraint in \eqref{optimal_power}. If it is infeasible, we gradually increase $M_k^{(0)}$ by $M_k^{(0)}=M_k^{(0)}+1$ and repeat the above steps until all users' target rates are reached.

\begin{algorithm}[!t]\label{method2}
	\caption{Proposed low-complexity algorithm for solving problem (P4)}
	\begin{algorithmic}[1]
		\STATE {\bf Step 1:}
		\STATE {Initialize $t=0$, a small number of elements assigned to user $K$, $M_K^{(t)}$.}
		\STATE Calculate the minimal $M_k^{(t)}$ that satisfies the decoding order constraint.
		%\STATE \quad Set iteration index $n=0$ for inner loop.
		\STATE \quad {\bf repeat}   
		\STATE \quad\quad  $M_k^{(t+1)}=M_k^{(t)}+1,\forall{k}.$
		\STATE \quad\quad $t=t+1$.
		\STATE \quad {\bf until} \eqref{optimal_power} is feasible.
		\STATE {\bf Step 2:}
		\STATE {Initialize $l=0$, $M_k^{(l)}$ as that obtained from the last step.}
		\STATE \quad {\bf repeat}   
		\STATE \quad\quad $M_1^{(l')}=M_1^{(l)}-1$, $M_k^{(l')}=M_k^{(l')}+1,k=2,...,K$.
		\STATE \quad\quad {\bf if} $R^{(l')}>R^{(l)}$ {\bf then}
		\STATE \quad\quad \quad $l=l+1.$
		\STATE \quad\quad \quad $M_k^{(l)}=M_k^{(l')},p_k^{(l)}=p_k^{(l')},\forall{k}.$
		\STATE \quad\quad {\bf endif} 
		\STATE \quad {\bf until} there is no rate increase.
	\end{algorithmic}
\end{algorithm}

\emph{2) Progressive surface-partition refinement for sum-rate enhancement:} Although the rate requirements are satisfied in the first step, the obtained surface-partition strategy may not be the optimal choice for sum-rate maximization. To improve the solution, we give the following lemma.
\vspace{-0.5em}
\begin{lemma}
\rm Given the initial surface-partition strategy in Step 1, and BS power allocation in \eqref{optimal_power}, if an element assigned to user $k$ is reassigned to another user $k'$, then $R_{k'}^{b}$ increases and $R_k^b$ decreases, while the achievable rates of the other users remain unchanged. 
\end{lemma}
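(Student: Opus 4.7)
My plan is to derive the lemma from the decoupled structure of the PTE effective channel together with the rate-saturating design of the Proposition 1 power allocation. First, I would observe that under the phase-shift choice in \eqref{opt_phase}, the effective channel of user $k$ is approximated by $c_k\approx|h_{k}|+\sum_{n=1}^{M_k}|[\mathbf{q}_{k}]_n|$, which depends only on user $k$'s own direct link, its assigned cascaded subsurface channel, and the subsurface size $M_k$; crucially, it is independent of the allocations $\{M_j\}_{j\neq k}$. Combined with Lemma 3, this implies that $\mathbb{E}\{(c_k)^2\}$ is a function of $M_k$ alone (the path losses being held fixed), and direct inspection of \eqref{expected_channel_gain_scheme2} shows it is strictly increasing in $M_k$, since every nonconstant term carries a positive coefficient.

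Second, reassigning one element from user $k$ to user $k'$ lowers $M_k$ by one and raises $M_{k'}$ by one while leaving every other $M_j$ intact. By the monotonicity above, $\mathbb{E}\{(c_k)^2\}$ strictly decreases, $\mathbb{E}\{(c_{k'})^2\}$ strictly increases, and $\mathbb{E}\{(c_j)^2\}$ is invariant for every $j\notin\{k,k'\}$. Plugging these into the expected-rate expression \eqref{rate_scheme2}, the change in $\mathbb{E}\{(c_k)^2\}$ appears in the SINR numerator of $\breve{R}_k$ (and in the interference denominators only of users with smaller index, whose rates are handled separately below). Using the monotonicity of the logarithm in its SINR argument then yields immediately that $\breve{R}_{k'}$ strictly increases and $\breve{R}_{k}$ strictly decreases.

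Third, for any user $j\notin\{k,k'\}$ with $j>1$, I would invoke Proposition 1: the allocation $p_j$ is recursively chosen so that the user-$j$ rate target is met with equality, pinning $\breve{R}_j=\gamma_j$ whenever the allocation is feasible. Therefore, even though the cascading updates in \eqref{optimal_power} may perturb the value of $p_j$ in response to the changes in $\mathbb{E}\{(c_k)^2\}$ and $\mathbb{E}\{(c_{k'})^2\}$, the rate $\breve{R}_j$ itself remains $\gamma_j$ and is unchanged. This rate-clamping is precisely what delivers the ``other users unchanged'' portion of the claim.

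The main obstacle I anticipate is tracking the cascading of the Proposition 1 recursion rigorously: because each $p_j$ depends on $\mathbb{E}\{(c_j)^2\}$ and on $\{p_i\}_{i>j}$, perturbations can propagate upward through several levels of the recursion, and in principle can affect the residual power $p_1$ and hence $\breve{R}_1$ as well. I would side-step this difficulty by never tracking the individual power shifts and instead appealing directly to the rate-saturation property of the recursion, which keeps $\breve{R}_j=\gamma_j$ intact for all $j>1$; any residual effect then surfaces only through the sum-rate contribution of the strongest user and the monotone movements of $\breve{R}_k$ and $\breve{R}_{k'}$ already established above.
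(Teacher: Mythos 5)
Your first two steps follow the paper's own route: $\mathbb{E}\{(c_k)^2\}$ in \eqref{expected_channel_gain_scheme2} depends only on $M_k$ and is increasing in it, and the rate in \eqref{rate_scheme2} is monotone in the user's own expected channel power gain, so moving one element makes $\breve{R}_{k'}$ rise and $\breve{R}_k$ fall while the other users' gains are untouched. The paper's proof stops essentially there, because it reads the hypothesis ``given the BS power allocation in \eqref{optimal_power}'' as holding the Step-1 powers fixed: with fixed powers and unchanged gains, the other users' rates do not move, and no appeal to rate saturation is needed.

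The gap is in your third step, where you switch to the opposite reading and let the powers be re-derived through the Proposition~1 recursion after the reassignment, then argue the other users' rates are pinned at $\gamma_j$. This is internally inconsistent with your second step: the recursion in \eqref{optimal_power} saturates \emph{every} user $j\geq 2$ at its target, so if the allocation is recomputed, the reassigned users $k$ and $k'$ (whenever both have index $\geq 2$) are themselves clamped at $\gamma_k$ and $\gamma_{k'}$, and the strict increase/decrease you derived under fixed powers evaporates. Moreover, under the recomputed allocation the residual power $p_1$ generally changes, so the strongest user's rate does change, contradicting the ``other users remain unchanged'' clause whenever $1\notin\{k,k'\}$; your closing remark that the residual effect ``surfaces only through the sum-rate contribution of the strongest user'' concedes exactly this. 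To repair the argument you should commit to the fixed-power reading throughout: with the Step-1 powers held fixed, the other users' rates are unchanged simply because their own gains, their own powers, and the interference they see (which, per the SINR structure underlying Appendix~A, scales with their own channel and the fixed powers of the stronger users) are all unaffected, and the monotonicity argument for users $k$ and $k'$ then stands on its own.
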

\vspace{-0.5em}
\begin{proof}
	First, it is observed from the expression of $R_k^b$ that the expected channel power gain and the transmit power of other users are not affected by the reassignment. Therefore, their average achievable rates do not change. Second, after this reassignment, the expected channel power gain of user $k$ and $k'$, i.e., $\mathbb{E}\{(c_k)^2\}$ and $\mathbb{E}\{(c_j)^2\}$, will increase and decrease, respectively. It is observed from \eqref{rate_scheme2} that the rate of user $k$ monotonically increases with $\mathbb{E}\{(c_k)^2\}$. Hence, the average rates of users $k$ and $k'$ follow the same trend as their expected channel power gains. This completes the proof.
\end{proof}

\textbf{Lemma 4} implies that the user rate can be adjusted by reallocating the elements from one to another user. This inspires us to design a progressive surface-partition refinement algorithm to gradually improve the system sum-rate. Specifically, we denote the initial surface-partition strategy obtained in Step 1 as $\{M_k^{(l)}\}$, where $l$ is the initial iteration number; and obtain the power allocation and the corresponding average sum-rate, denoted by $\{p_k^{(l)}\}$ and  $R^{(l)}$, respectively. Next, we consecutively reassign an element associated with user 1 to another user $k, k=2,3...,K$, and calculate the average sum-rate after reassignment, denoted by $R^{(l')}$. If $R^{(l')}>R^{(l)}$, we update the STAR-RIS surface-partition strategy and BS power allocation, and then enter the next iteration. This process is repeated until the system sum-rate can not be improved anymore.

The main procedures of the proposed solution for solving (P4) are summarized in \textbf{Algorithm 2}. Generally, the worst-case complexity of the proposed algorithm is $\mathcal{O}(KM)$, which is much small than that of the exhaustive-search based algorithm, i.e. $\mathcal{O}(M^K)$.

\begin{table}[]
	\caption{Comparison between the proposed two TTS transmission protocols.}
	\label{tab:my-table}
	\resizebox{\textwidth}{!}{%
		\begin{tabular}{|c|c|c|c|c|c|}
			\hline
			\textbf{Protocol}    & \textbf{CSI interval} & \textbf{Required CSI}                                                               & \textbf{Design parameters}                                                           & \textbf{Method}                                                    & \textbf{Overhead} \\ \hline
			\multirow{2}{*}{BTE} & long-term             & \begin{tabular}[c]{@{}c@{}}path-loss, Rician factor, \\ LoS components\end{tabular} & STAR-RIS coefficients                                                                & Algorithm 1                                                        & \textbackslash{}  \\ \cline{2-6} 
			& short-term            & effective channels                                                                  & BS power allocation                                                                  & Proposition 1                                                      & $K$                 \\ \hline
			\multirow{2}{*}{PTE} & long-term             & path-loss                                                                           & surface partition strategy                                                           & Algorithm 2                                                        & \textbackslash{}  \\ \cline{2-6} 
			& short-term            & subsurface cascaded channels                                                        & \begin{tabular}[c]{@{}c@{}}STAR-RIS phase-shifts \\ BS power allocation\end{tabular} & \begin{tabular}[c]{@{}c@{}}Eq. (21), \\ Proposition 1\end{tabular} & $M+K$               \\ \hline
		\end{tabular}%
	}
\end{table}

\vspace{-1em}
\subsection{Comparison between the proposed protocols}
In Table I, we summarize and compare the proposed two TTS transmission protocols in terms of the required CSI, design parameters, optimization methods, and channel estimation overhead. Compared with the BTE protocol, the PTE protocol requires a larger channel estimation overhead, while is less sensitive to the variations of channel statistics.
Specifically, the BTE protocol is only effective when the channels are LoS-dominant, while results in a considerable rate loss in rich scattering environments, e.g., when $\kappa_{1},\kappa_{2}\rightarrow0$, as explained in \textbf{Remark 1}. In contrast, the proposed PTE protocol provides a squared-order passive beamforming gain, i.e., $\mathcal{O}(M^2)$, even for Rayleigh fading channels. This is because the expected effective channel power gain, i.e., $\mathbb{E}\{(c_k)^2\}$ always contains the term (i.e., the third term in \eqref{expected_channel_gain_scheme2}) that is proportional to $(M_k)^2$, as shown in \textbf{Lemma 3}.

\section{Simulation Results}
In this section, numerical results are provided to validate the effectiveness of our proposed TTS transmission protocols and the STAR-RIS MS design. We consider a three-dimensional system setup, where the BS is located at the (0, 0, 1) meters (m) and the reference center of the STAR-RIS is at (50, 0, 1) m. 
We assume that the T and R users are randomly located within a circle centered at the STAR-RIS with a radius of $5$ m. The distance-dependent path loss is modeled as $\delta(d)=\rho_0(\frac{d}{d_0})^{-\alpha}$, where $d$ is the individual link distance, $\rho_0$ denotes the path loss at the reference distance $d_0=1$ m, and $\alpha$ denotes the path loss exponent. We set $\rho_0=-30$ dB, $\alpha_{k}=3.5$, $\alpha_{BS}=2$, $\alpha_{Sk}=2.2$. The noise power is set as $\sigma^2 = -80$ dBm and the Rician factors are set as $\kappa_{1}=\kappa_{2}=\kappa$. Besides, the achieved sum-rates of the proposed protocols are averaged over 5000 random channel realizations.

\begin{figure}[t!]
	\centering
	\subfigure[Convergence behavior of Algorithm 1 for the BTE protocol with $M=40$ and $M=80$.]{\label{con1}
		\includegraphics[width= 3in]{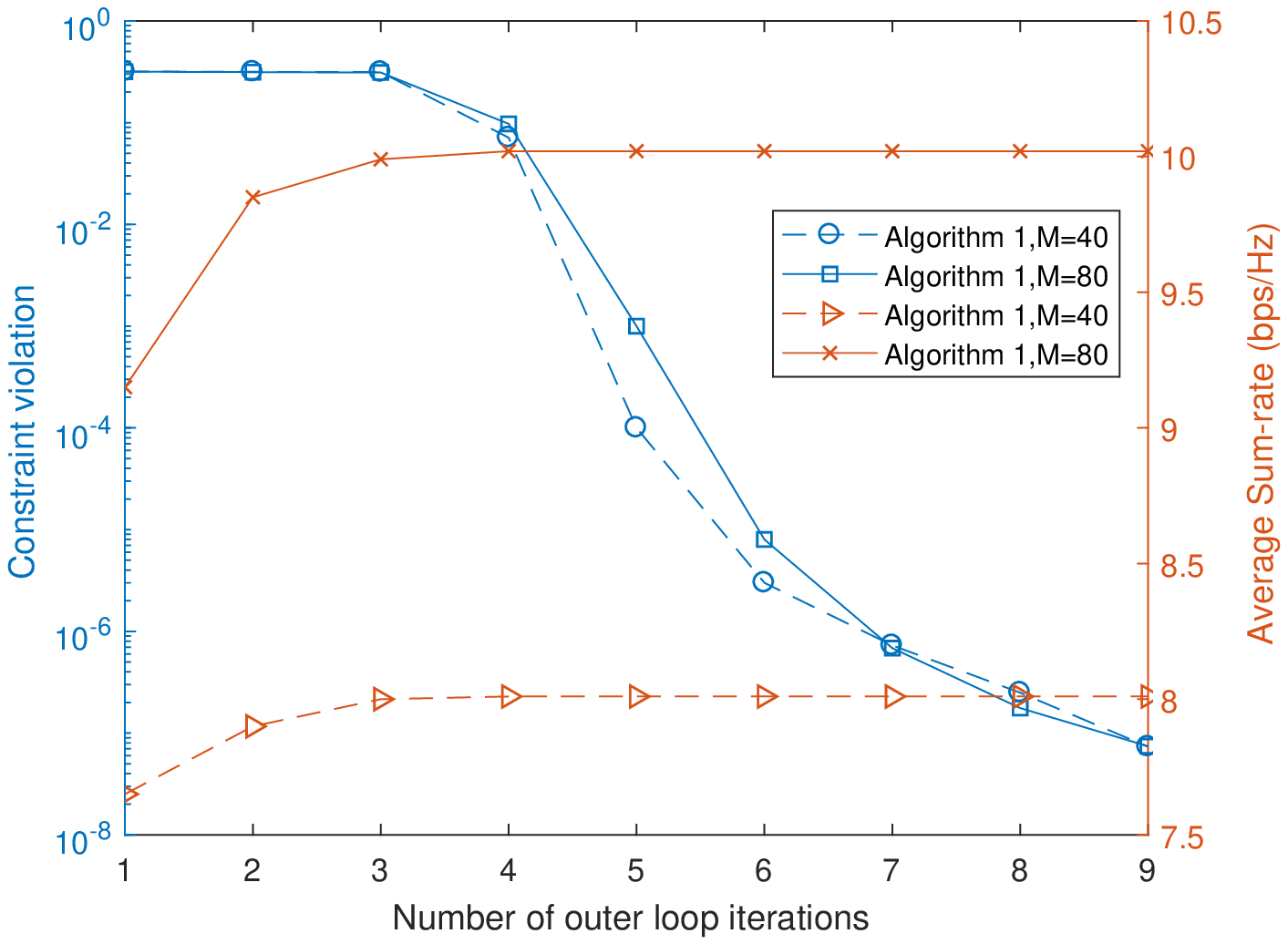}}
	\subfigure[Convergence behavior of Algorithm 2 for the PTE protocol with $M=50$ and $M=60$.]{\label{con2}
		\includegraphics[width= 3in]{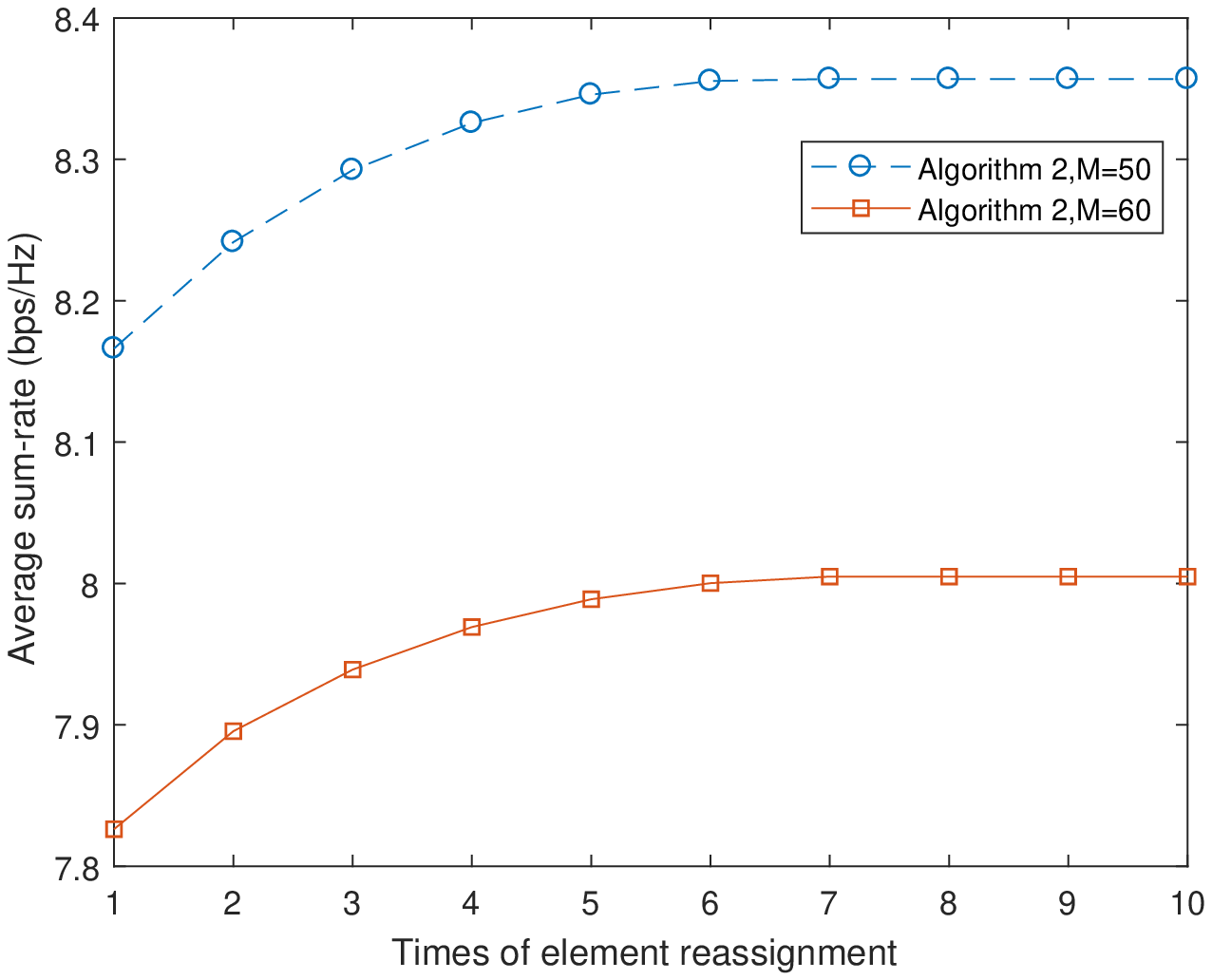}}
	\caption{Convergence behavior of proposed algorithms with $T_c=400,\kappa=1, \gamma_k=2$ bps/Hz.}\label{con}
\end{figure}

\emph{1) Convergence behavior of the proposed algorithms:} In Fig. \ref{con}, we verify the convergence behavior of the proposed algorithms. Specifically,  Fig. \ref{con1} shows the violation of the equality constraints and the average sum-rate of \textbf{Algorithm 1} versus the number of outer loop iterations with $\eta=10^{-4}$ and $\omega=10$. It is observed that the constraint violation decreases quickly as the number of outer loop iterations increases, and ultimately reaches the predefined accuracy (i.e., $\epsilon=10^{-7}$). Besides, the sum-rate increases quickly and converges after 4 iterations. Fig. \ref{con2} plots the system sum-rate versus the times of element reassignments for the second step of \textbf{Algorithm 2}. With the proposed progressive refinement algorithm, the system sum-rate is observed to monotonically increase once the rate requirements are met in the first step. %Moreover, it takes approximately 8 iterations for the algorithm to converge, which also indicates a significant reduction in computational complexity compared with the exhaustive-search based algorithm. 
%It is also worth pointing out that the achieved maximum sum-rate with $M=60$ is smaller than that with $M=50$. This is expected since the former case requires much larger training overhead that outweighs the beamforming gain offered by having more elements.

\emph{2) Accuracy of the expected rate/channel power gain approximation:} In Fig. \ref{fig_rate_robust}, we evaluate the accuracy of the approximation for the expected sum-rate of the proposed protocols, i.e., $\sum_{k=1}^K\bar{R}_k$ and $\sum_{k=1}^K\breve{R}_k$. Specifically, the approximations of the expected rates of each user for the proposed BTE protocol and PTE protocol are given as \eqref{noma_expect_rate_approx} and \eqref{rate_scheme2}, respectively. Two sets of experiments are carried out under $P_{\max}=30$ dBm, $\gamma_k=2$ bps/Hz and $P_{\max}=25$ dBm, $\gamma_k=1$ bps/Hz, respectively. It is observed that, for all cases and both protocols, the approximated sum-rate is only slightly larger than the simulation result, since the approximation using the Jensen equality serves as an upper-bound for the exact average achievable rate.

In Fig. \ref{fig_scheme2_robust}, we further evaluate the accuracy of the approximation for the expected channel power gains of the PTE protocol given in \eqref{expected_channel_gain_scheme2}. The surface-partition strategy is optimized under $M=50,\gamma_k=2, \kappa=1$. Under this setup, according to \textbf{Algorithm 2}, the number of elements assigned to each user is 36, 9, 3 and 2, respectively.
The exact expected channel power gains, i.e., $|c_k|^2$, are obtained by averaging over 10000 random channel realizations for each user. As observed, our analysis matches well with the exact results for users 1, 3, 4, regardless of the Rician factor $\kappa$. This is because these users' desired signals are almost not affected by other transmission/reflection elements that are not assigned to them. For user 2, its exact channel power gain is slightly larger than the made approximation. This is because, in this case, the subsurface associated with user 1 serves as a random scatter, whose scattered signals are combined constructively with user 2's desired signals.

\begin{figure}[t!]
	\centering
	\begin{minipage}[t]{0.45\linewidth}
		\includegraphics[width=3in]{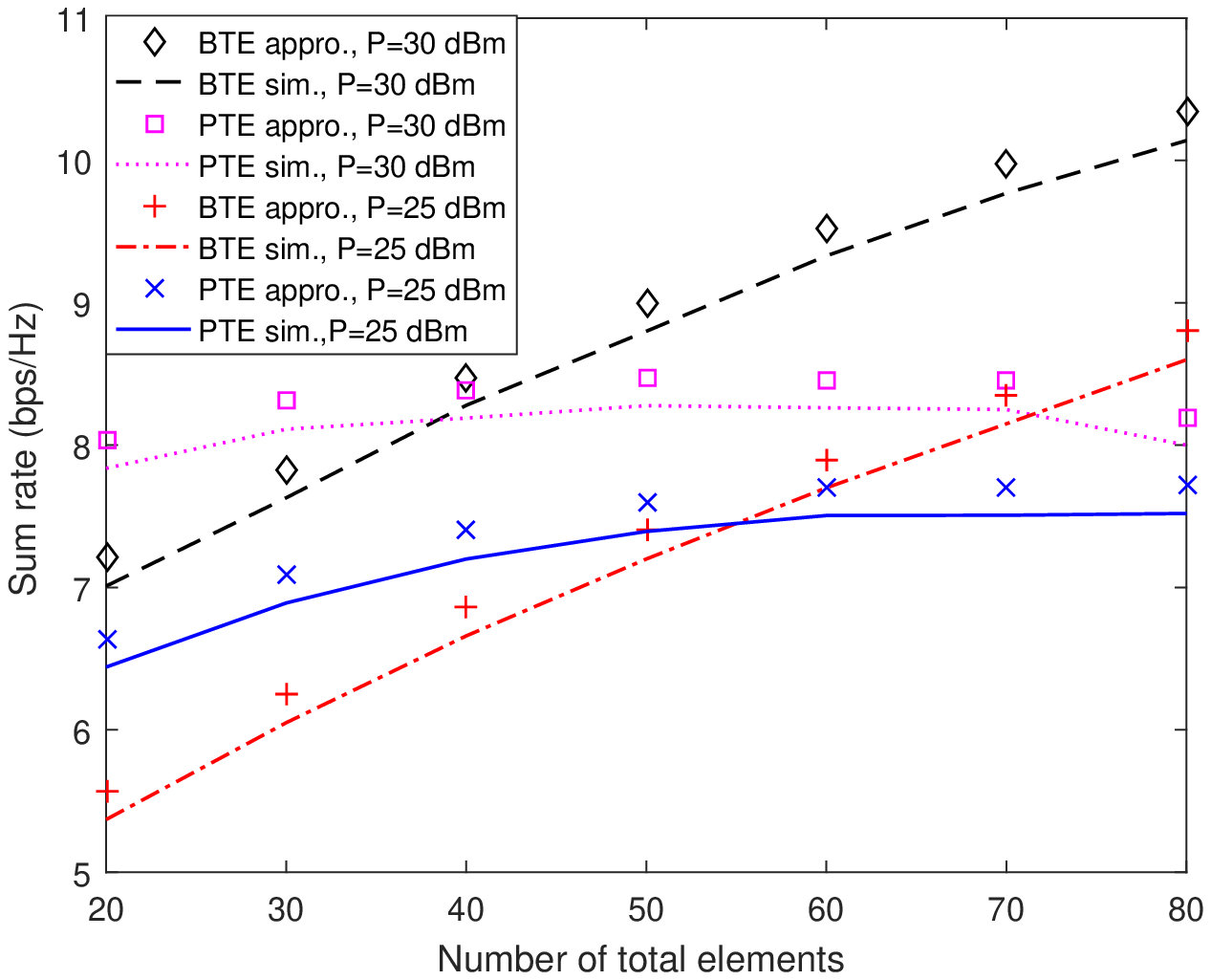}
		\caption{Accuracy of sum-rate approximation.}
		\label{fig_rate_robust}
	\end{minipage}
	\quad
	\begin{minipage}[t]{0.45\linewidth}
		\includegraphics[width=3in]{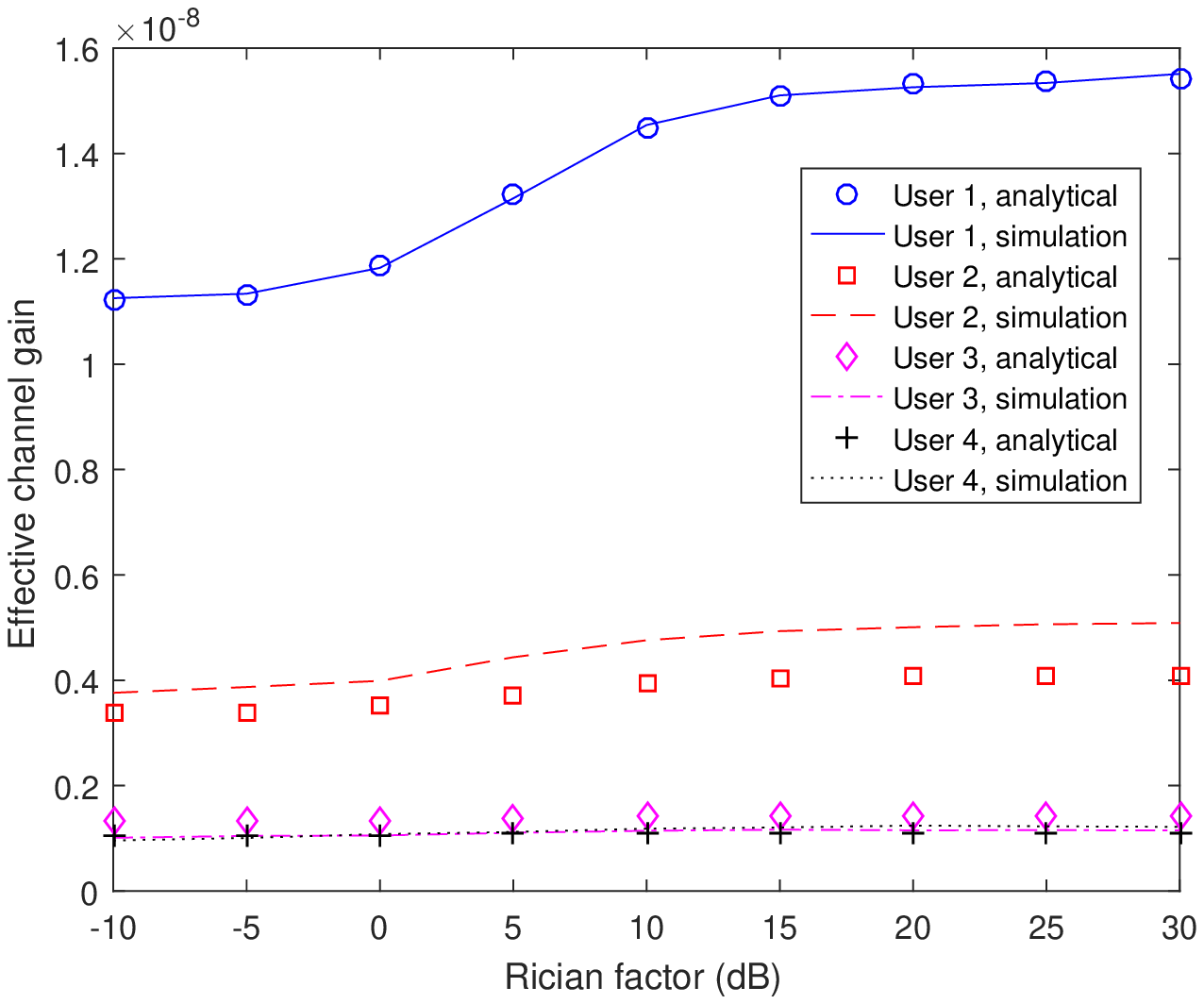}
		\caption{Accuracy of effective channel power gain approximation.}
		\label{fig_scheme2_robust}
	\end{minipage}
\end{figure}

\emph{3) Effectiveness of the proposed algorithms:} To show the effectiveness of our proposed algorithms, we consider the following benchmarks: 1) For the BTE protocol, the semidefinite relaxation (SDR) method is used for optimizing the phase-shifts\cite{irs_joint} and the penalty-based method is adopted for satisfying the binary amplitude constraint. This benchmark serves as a performance upper-bound since the rank-one constraint for phase-shifts is relaxed using SDR. 2) For the PTE protocol, we choose the exhaustive-search based algorithm for surface-partition as a performance upper-bound.  We show the achievable sum-rates attained by different algorithms versus the number of total elements in Fig. \ref{effective}. It is observed that the proposed algorithms for both protocols achieve very close performance with their respective upper-bounds, which thus validates the effectiveness of our proposed algorithms.

\emph{4) Performance comparison:} To demonstrate the benefits brought by STAR-RIS and NOMA, we consider the following benchmark schemes:

\begin{itemize}
	%	\item \textbf{STAR-RIS assisted OMA System:} In this case, we consider OMA transmission protocol such that frequency and time division multiple access are considered to serve multiple users simultaneously. 
	\item \textbf{STAR-RIS assisted FDMA/TDMA system:} Two types of OMA schemes are considered, namely, frequency division multiple access (FDMA) and time division multiple access (TDMA). For FDMA, the operation modes and the phase-shifts are configured in the same way as our proposed PTE protocol, while the users are served over orthogonal frequency RBs. For TDMA, each coherence block is further divided into several equal sub-blocks, each of which is responsible for serving one user, with all the STAR elements operating in the transmission/reflection mode. Note that the channel estimation overhead for FDMA and TDMA is $M+K$ and $K(M+1)$, respectively.
	\item \textbf{Conventional RIS-assisted NOMA system (CR-NOMA):} In this case, one reflecting-only RIS and one transmitting-only RIS (each with $M/2$ elements) are co-located for achieving full-space coverage. This scheme can also be regarded as a special case of mode switching, where half of the elements are fixed to operate in the transmission/reflection mode.
	\item \textbf{Instantaneous full CSI without considering channel estimation overhead (Ideal CSI):} In this case, we assume that the instantaneous full CSI is perfectly known at the BS for system optimization. To evaluate the performance of the proposed CSI acquisition methods, we neglect the channel estimation overhead of this benchmark scheme (which requires $KM+K$ channel training symbols in practice), thus its achievable sum-rate serves as the performance upper-bound.
\end{itemize}

\begin{figure}[t!]
	\centering
	\begin{minipage}[t]{0.45\linewidth}
		\includegraphics[width=3in]{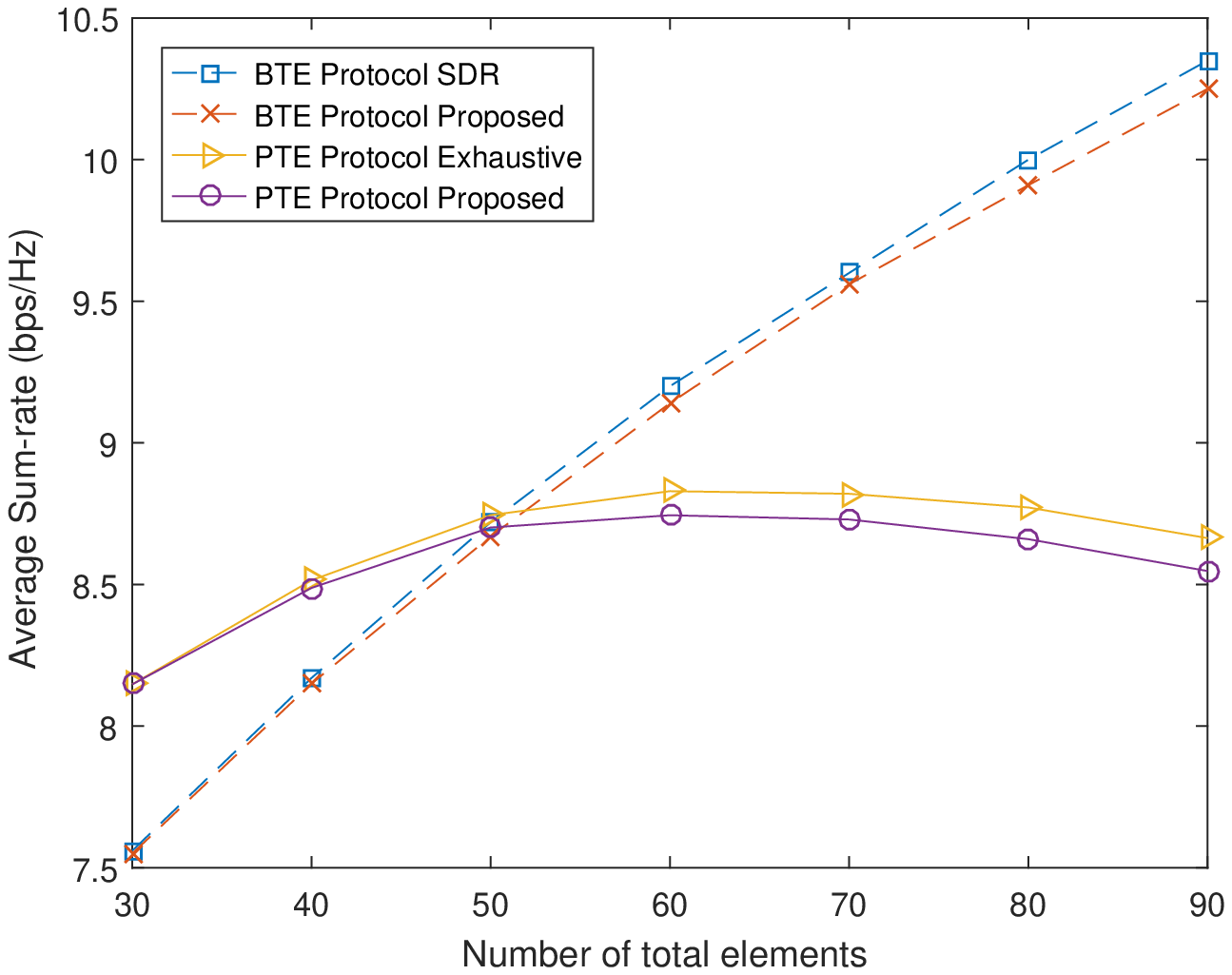}
		\caption{Average achievable sum-rate versus the number of total elements $M$ with $T_c=400, \kappa=3, \gamma_k=1.8$ bps/Hz.}
		\label{effective}
	\end{minipage}
	\quad
	\begin{minipage}[t]{0.45\linewidth}
		\includegraphics[width=3in]{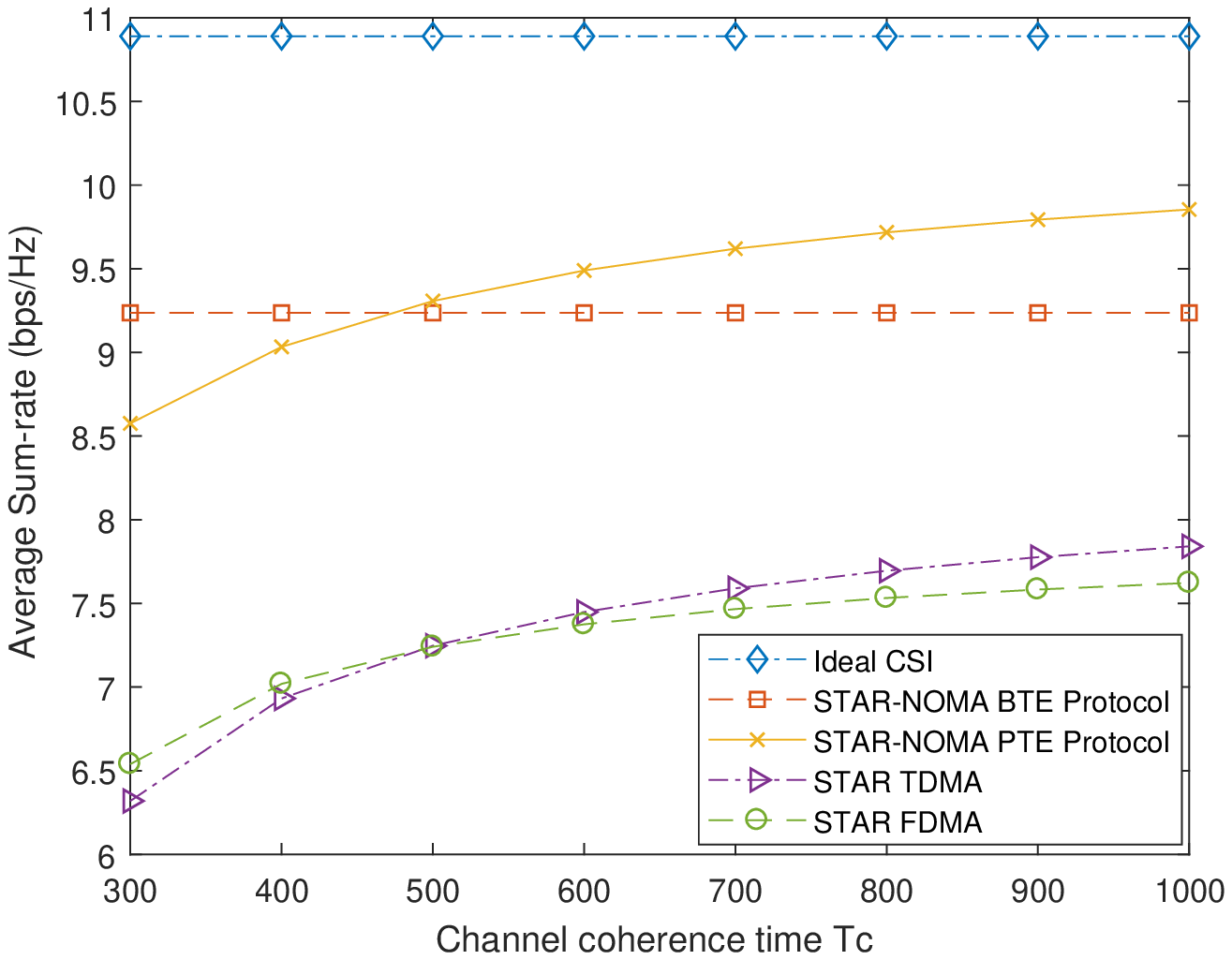}
		\caption{Average achievable sum-rate versus the channel coherence time $T_c$.}
		\label{rate_coherence}
	\end{minipage}
\end{figure}

We first consider a simple scenario with two users to show the superiority of NOMA over OMA.  In Fig. \ref{rate_coherence}, we plot the sum-rate of the users versus the channel coherence time $T_c$ with $M=50,\kappa=3, \gamma_k=3$ bps/Hz. Several important observations are made as follows. Firstly, it is observed that the performance of the proposed BTE protocol is not affected by $T_c$, since by exploiting S-CSI, almost all the time is utilized for data transmission. In contrast, the achieved sum-rate of the proposed PTE protocol increases with $T_c$ and is larger than that of the BTE protocol when $T_c$ is sufficiently large, i.e., $T_c\geq500$. Moreover, as $T_c$ grows, the impact of the channel estimation overhead on the system performance weakens. In this case, utilizing the I-CSI for the STAR-RIS beamforming design becomes more effective. Secondly, the STAR-RIS assisted systems based on NOMA largely outperform those based on FDMA and TDMA, since NOMA provides a larger multiplexing gain than FDMA/TDMA. This also indicates that NOMA is more suitable in practical systems for supporting more users. Finally, although the TDMA scheme can dynamically adjust STAR-RIS phase-shifts over time, it incurs a larger channel estimation overhead for each user, which is not practically effective.

Next, we consider the more general case with four users to show the benefits of deploying STAR-RISs. In Fig. \ref{fig_M}, we depict the sum-rate brought by different transmission protocols versus the number of total elements $M$. It is observed that the ideal CSI scheme offers the best performance. The performance gap between the ideal CSI case and the proposed BTE protocol is because the S-CSI can not well characterize the variations of the instantaneous channels, whereas for the proposed PTE protocol, the gap is mainly attributed to the training time of channel estimation. One can observe that there exists a tradeoff between the total number of elements and the system sum-rate of the PTE protocol. Specifically, the average rate under the PTE protocol first increases and then decreases with $M$. This can be explained as follows. When $M$ is small, a larger surface achieves a higher passive beamforming gain. However, as $M$ becomes larger, the penalty due to channel estimation overhead starts to dominate, which becomes the main performance bottleneck when $M$ exceeds a threshold (i.e., $M>50$). On the other hand, the sum-rate of the BTE protocol monotonically increases with $M$, because with S-CSI, the overhead for estimating the instantaneous effective channels in each coherence block is very small. Furthermore, as observed, the considered STAR-RIS aided systems significantly outperform the conventional RIS-aided systems for different $M$, and the performance gain increases with $M$. In a nutshell, the two proposed transmission protocols for STAR-RIS are generally preferred for the scenarios with a large and a small number of elements, respectively.

\begin{figure}[t!]
	\centering
	\begin{minipage}[t]{0.45\linewidth}
		\includegraphics[width=3 in]{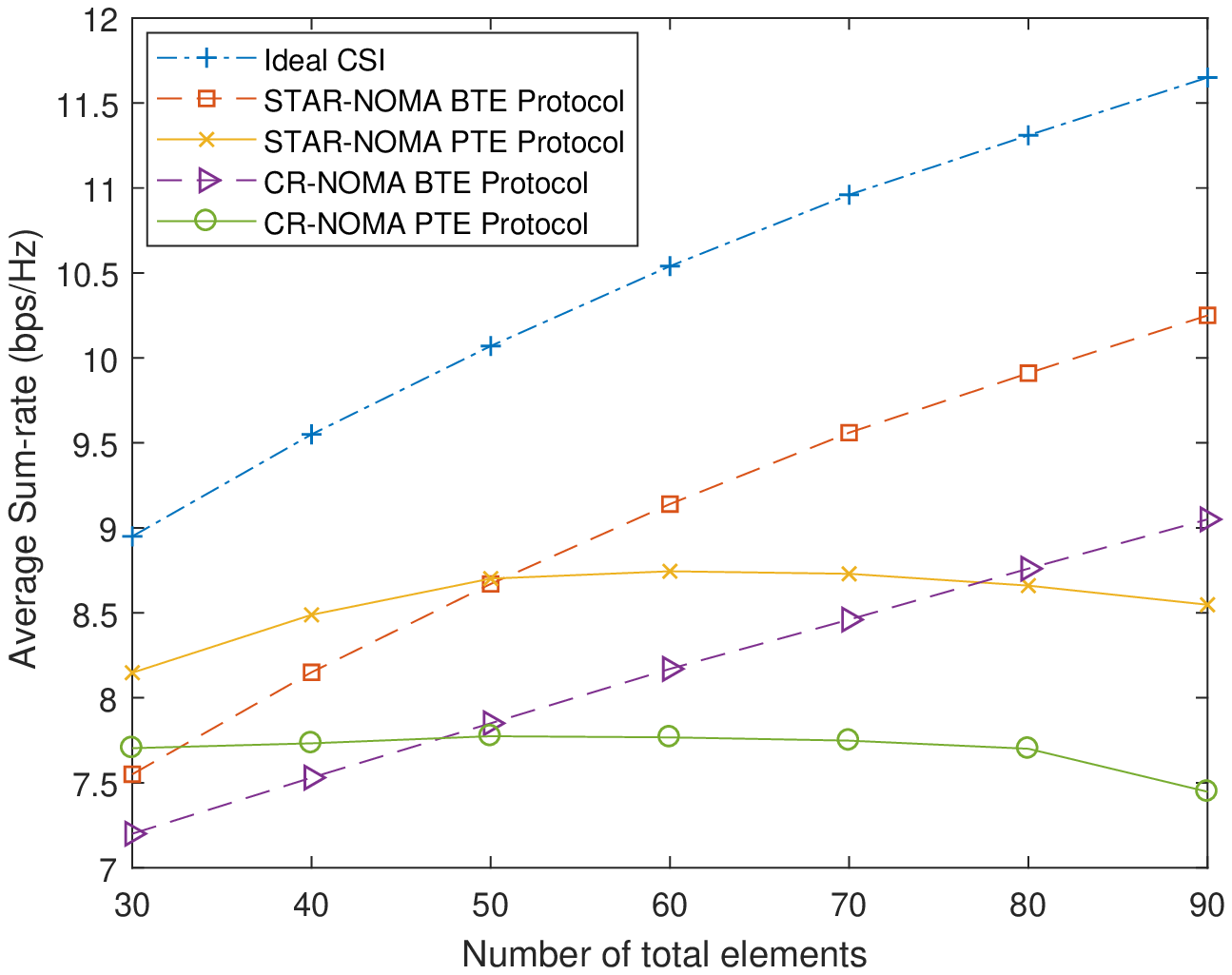}
		\caption{Average achievable sum-rate versus the number of total elements $M$ with $T_c=400, \kappa=3, \gamma_k=1.8$ bps/Hz.}
		\label{fig_M}
	\end{minipage}
	\quad
	\begin{minipage}[t]{0.45\linewidth}
		\includegraphics[width=3in]{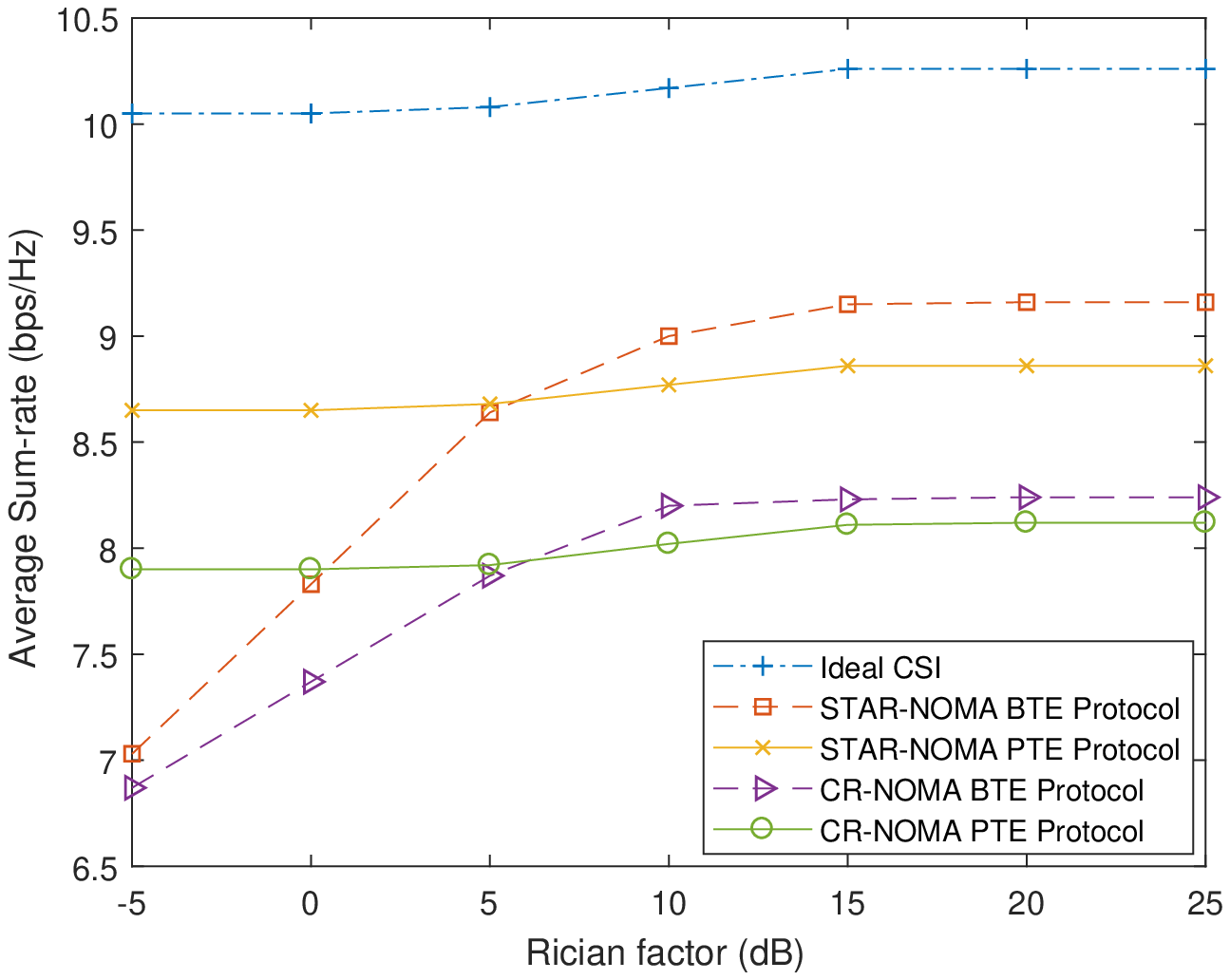}
		\caption{Average achievable sum-rate versus Rician factor $\kappa$ with $T_c=400, M=50, \gamma_k=1.8$ bps/Hz.}
		\label{fig_rician}
	\end{minipage}
\end{figure}

Fig. \ref{fig_rician} shows the rate performance of the proposed protocols versus the Rician factor $\kappa$. It is observed that both protocols for STAR-RIS-NOMA systems outperform their counterparts for CR-NOMA systems, since the STAR-RIS provides more flexibility in controlling the number of transmission/reflection elements, thus achieving a larger NOMA gain. 
Moreover, the performance of the BTE protocol increases drastically at first and then tends to be saturated. This is because the S-CSI can well characterize the actual channel condition in the high-Rician-factor regime, but may be ineffective when the NLoS components are dominant, as discussed in \textbf{Remark 1}. On the other hand, the sum-rate achieved by the PTE protocol slightly increases with $\kappa$, which indicates that the proposed PTE protocol is not sensitive to the fading condition. To summarize, the proposed BTE protocol outperforms the PTE protocol when the LoS components of the channel coefficients are dominant, and vice versa.

\vspace{-1em}

\section{Conclusions}
In this paper, we proposed two efficient TTS transmission protocols for different channel setups in a STAR-RIS aided NOMA communication system under MS. Specifically, for the LoS-dominant channel setup, a BTE protocol was first proposed, where the long-term STAR-RIS transmission- and reflection-coefficient vectors are optimized based on the S-CSI, while the short-term BS power allocation in each coherence block is designed based on the estimated effective CSI.  Moreover, a penalty-based method was proposed to optimize the STAR-RIS beamforming coefficients. Next, for the rich scattering channel setup, we further proposed a customized PTE protocol, where the long-term operation modes of STAR-RIS elements are optimized based on the information of large-scale path-loss, while the short-term BS power allocation and STAR-RIS phase-shifts are designed to cater to the estimated partial I-CSI. Moreover, a low-complexity surface-partition algorithm was devised to obtain the elements allocation assigned to serve each user. Numerical results validated the effectiveness of our proposed designs. Particularly, both transmission protocols can greatly reduce the channel estimation overhead while achieving high communication rates. 
%In this paper, we have assumed for simplicity that all the necessary CSI is perfectly known at the BS. Further investigation into the more robust design under imperfect CSI is left for future work.
\vspace{-1em}
\section*{Appendix A: Proof of Proposition 1}
Given the effective channel power gains of all users, the short-term optimization problem is equivalent to a power allocation problem in a single-cluster NOMA system. Following the method in\cite{noma_oma}, to maximize the system sum-rate under users' target rates, each user should be first allocated the minimum power for meeting its rate requirement, while the remaining power should be allocated to the user with the highest decoding order. Hence, the optimal power allocation is obtained by solving the following equations:
\begin{equation}\label{equation_power}
	\left\lbrace 
	\begin{aligned}
		&p_2=(2^{\gamma_2}-1)\left(p_1+\frac{\sigma^2}{{|c_2(\mathbf{v}_s)|^2}}\right),\\
		&\vdots\\
		&p_K=(2^{\gamma_K}-1)\left(\sum_{k=1}^{K-1}p_k+\frac{\sigma^2}{{|c_K(\mathbf{v}_s)|^2}}\right),\\
		&\sum_{k=1}^K{p_k}={P_{\text{max}}}.
	\end{aligned}
	\right.
\end{equation}

By solving the above equations, we obtain the optimal BS power allocation given in \eqref{optimal_power}.  Moreover, the original problem is feasible if all the power allocation factors $p_k^*$ are positive values and the strongest users' rate requirements are satisfied.
\vspace{-1em}
\section*{Appendix B: Proof of Lemma 1}
The expected effective channel power gains can be decomposed as follows
\begin{equation}\label{proof_exp}
	\mathbb{E}\left\lbrace |{h}_{k}+\mathbf{r}_{k}^H\boldsymbol{\Theta}_s\mathbf{g}|^2\right\rbrace \overset{(a)}{=}\mathbb{E}\{|h_k|^2\}+|x_1|^2+\mathbb{E}\{|x_2|^2\}+\mathbb{E}\{|x_3|^2\}+\mathbb{E}\{|x_4|^2\},
\end{equation}
where $x_1=\sqrt {\frac{\kappa_1\kappa_2{\delta_{BS}\delta_{Sk}}}{{({\kappa_1} + 1)(\kappa_2+1)}}} \bar{\mathbf{r}}_{k}^H\boldsymbol{\Theta}_s\bar{\mathbf{g}}$, $x_2=\sqrt {\frac{\kappa_1{\delta_{BS}\delta_{Sk}}}{{({\kappa_1} + 1)(\kappa_2+1)}}} \bar{\mathbf{r}}_{k}^H\boldsymbol{\Theta}_s\hat{\mathbf{g}}$, $x_3=\sqrt {\frac{\kappa_2{\delta_{BS}\delta_{Sk}}}{{({\kappa_2} + 1)(\kappa_2+1)}}} \hat{\mathbf{r}}_{k}^H\boldsymbol{\Theta}_s\bar{\mathbf{g}}$, $x_4=\sqrt {\frac{\delta_{BS}\delta_{Sk}}{{({\kappa_2} + 1)(\kappa_2+1)}}} \hat{\mathbf{r}}_{k}^H\boldsymbol{\Theta}_s\hat{\mathbf{g}}$, and $(a)$ is due to the fact that the small-scale fading channel coefficients $h_k, \hat{\mathbf{g}}, \hat{\mathbf{r}}_k$ are independent. Next, we have
\begin{equation}
|x_1|^2=\frac{\kappa_1\kappa_2{\delta_{BS}\delta_{Sk}}}{{({\kappa_1} + 1)(\kappa_2+1)}}\left|  {\bar{\mathbf{w}}_k^H\mathbf{v}_s}\right|^2,
\end{equation}
\begin{equation}
	\mathbb{E}\{|x_2|^2\}=\frac{\kappa_1{\delta_{BS}}}{{({\kappa_1} + 1)(\kappa_2+1)}}\bar{\mathbf{r}}_{k}^H\boldsymbol{\Theta}_s\mathbb{E}\{\hat{\mathbf{g}}\hat{\mathbf{g}}^H\}\boldsymbol{\Theta}_s\bar{\mathbf{r}}_{k}\overset{(b)}{=}\frac{\kappa_1{\delta_{BS}}}{{({\kappa_1} + 1)(\kappa_2+1)}}\sum_{m=1}^{M}\beta_m^s,
\end{equation}
where $(b)$ holds since $\mathbb{E}\{\hat{\mathbf{g}}\hat{\mathbf{g}}^H\}=\mathbf{I}_M$, $\boldsymbol{\Theta}_s\boldsymbol{\Theta}_s^H=\text{diag}(\boldsymbol{\beta}_s)$. The remaining terms can be obtained similarly as 
\begin{align}
	\mathbb{E}\{|x_3|^2\}=\frac{\kappa_2{\delta_{Sk}}}{{({\kappa_1} + 1)(\kappa_2+1)}}\sum_{m=1}^{M}\beta_m^s,\\
	\mathbb{E}\{|x_4|^2\}=\frac{1}{{({\kappa_1} + 1)(\kappa_2+1)}}\sum_{m=1}^{M}\beta_m^s.
\end{align}
Additionally, we have $\mathbb{E}\{|h_k|^2\}=\delta_k$. Therefore, by substituting (29)-(32)  into (\ref{proof_exp}), we arrive at (\ref{expect_channel}).

\vspace{-1em}
\section*{Appendix C: Proof of Lemma 3}
Applying the binomial expansion theorem, we have
\begin{equation}\label{lemma2_1}
	\begin{aligned}
		&\mathbb{E}\{(c_k)^2\}=\mathbb{E}\left\lbrace \left( |h_{k}|+\sum_{m=1}^{M_k}|[\mathbf{q}_{k}]_m|\right) ^2\right\rbrace =\\	 &\underbrace{\mathbb{E}\{|h_{k}|^2}_{x_1}\}+\underbrace{\mathbb{E}\left\lbrace \left(  \sum_{m=1}^{M_k}|[\mathbf{q}_{k}]_m|\right)^2 \right\rbrace }_{x_2} 
		+2\underbrace{\mathbb{E}\left\lbrace\left(  \sum_{m=1}^{M_k}|h_{k}||[\mathbf{q}_k]_m|\right) \right\rbrace}_{x_3},	
	\end{aligned}
\end{equation}
where the second term can be further expanded as
\begin{equation}\label{expe_2}
	\begin{aligned}
		x_2=&\mathbb{E}\left\lbrace   \sum_{m=1}^{M}|[\mathbf{q}_{k}]_m|^2 \right\rbrace + 
		\mathbb{E}\left\lbrace   \sum_{m=1}^{M_k}\sum_{n=1,n\neq{m}}^{M_k}|[\mathbf{q}_{k}]_m||[\mathbf{q}_{k}]_n| \right\rbrace. 
	\end{aligned}
\end{equation}

First, it is easy to have $\mathbb{E}\{|h_{k}|^2\}=\delta_k$, $\mathbb{E}\{|[\mathbf{q}_k]_m|^2\}=\delta_{BS}\delta_{Sk}$. 
Then, we note that $h_k$, $\mathbf{q}_k$ are independent variables with the following results
\begin{subequations}\label{distribution}
	\begin{align}
		&\mathbb{E}\{|h_{k}|\}=\frac{\sqrt{\pi{\delta_{k}}}}{2},\\
		\label{lemma2_2}&\mathbb{E}\{|[\mathbf{q}_k]_m|\}=\frac{\pi{\delta_{BS}\delta_{Sk}}}{4(\kappa_{1}+1)(\kappa_{2}+1)}L_{\frac{1}{2}}(-\kappa_{1})L_{\frac{1}{2}}(-\kappa_{2}), \forall {m} \in\mathcal{M}_k,
	\end{align}
\end{subequations}	
where \eqref{lemma2_2} is due to the fact that $[\mathbf{q}_k]_m$ is the product of two independent Rician variables, which follow the same distribution with the element in $\mathbf{g}$ and $\mathbf{r}_k$, respectively.

Substituting the above results into \eqref{lemma2_1} yields the desired result in \eqref{expected_channel_gain_scheme2}.

\bibliographystyle{IEEEtran}
\bibliography{mybib}
\end{document}